\newcounter{mnotecount}[section]
\renewcommand{\themnotecount}{\thesection.\arabic{mnotecount}}
\newcommand{\mnote}[1]
{\protect{\stepcounter{mnotecount}}$^{\mbox{\footnotesize $%
\!\!\!\!\!\!\,\bullet$\themnotecount}}$ \marginpar{
\raggedright\tiny\em $\!\!\!\!\!\!\,\bullet$\themnotecount: #1} }
\newcommand{\lb}{\label}
\newcommand{\be}{\begin{equation}}
\newcommand{\ee}{\end{equation}}
\newcommand{\ben}{\begin{eqnarray*}}
\newcommand{\een}{\end{eqnarray*}}
\newcommand{\bea}{\begin{eqnarray}}
\newcommand{\eea}{\end{eqnarray}}
\newcommand{\md}{{\mathrm{d}}}
\newcommand{\beq}{\begin{equation}}
\newcommand{\eeq}{\end{equation}}
\newcommand{\beqn}{\begin{equation}\nonumber}
\newcommand{\bean}{\begin{eqnarray}\nonumber}
\DeclareMathAlphabet{\mathpzc}{OT1}{pzc}{m}{it}
\def\cE{{\mathcal E}}
\def\cR{{\mathcal R}}
\def\cH{{\mathcal H}}
\def\cQ{{\mathcal Q}}
\def\cL{{\mathcal L}}
\def\cM{{\mathcal M}}
\def\cC{{\mathcal C}}
\def\cI{{\mathcal I}}
\def\cJ{{\mathcal J}}
\def\dV{{\mbox{dVol}}}
\def\cM{{\mathcal M}}
\def\cRH{{\mathcal R}{\mathcal H}}
\newtheorem{thm}{Theorem}[section]
\newtheorem{prop}[thm]{Proposition}
\renewenvironment{proof}[1][Proof]{\begin{trivlist}
\item[\hskip \labelsep {\bfseries #1:}]}{\qed\end{trivlist}}
\newcommand{\sig}[3]{\Sigma_{#1}^{#2,#3}}
\def\XXint#1#2#3{{\setbox0=\hbox{$#1{#2#3}{\int}$ }
\vcenter{\hbox{$#2#3$ }}\kern-.6\wd0}}
\def\thesection{\arabic{section}}
\def\p@subsection{}
\def\p@subsubsection{}
\def\p@paragraph{}
\begin{document}

\begin{center}

{\bf {\Large 
Energy Decay in $1+1$ Rindler Spacetime}}\\

\bigskip
\bigskip
Anne T. Franzen\footnote{e-mail address: anne.franzen@tecnico.ulisboa.pt}{${}^{,\star}$} and
Yafet {{E}.} Sanchez Sanchez\footnote{e-mail addresses:yafet.erasmo.sanchez.sanchez@edu.unige.it}{${}^{,\dagger}$}

\bigskip
\bigskip
{${}^{\star}$} {\it Center for Mathematical Analysis, Geometry and Dynamical Systems,}\\
{\it Mathematics Department, Instituto Superior T\'ecnico,}\\ 
{\it Universidade de Lisboa, Portugal}\\
{${}^{\dagger}$}

{\it INFN Sezione di Genova} 
{\it Università di Genova, Italy}

\end{center}
\medskip

\centerline{ABSTRACT}

\noindent
We consider solutions of the massless scalar wave equation $\Box_g\psi=0$ on a fixed Rindler background and show polynomial decay of the energy flux related to Rindler observers near null infinity and to local observers near the Rindler horizon. The main estimates are obtained via the vector field method using suitable vector fields multipliers which are analogous to the ones used in Schwarzschild spacetime and Minkowski spacetime. Furthermore, we compare the Schwarzschild and Rindler scenarios and discuss the extent to which the principle of equivalence holds.
\medskip

\tableofcontents

\section{Introduction}
\lb{intro}

The Rindler spacetime $(\cM,g)$ is a solution to the vacuum Einstein field equations. A brief introduction to the spacetime is for example given
in \cite{muka, wald, waldgr, socolov, semay}. The problem of analyzing the solution of the scalar wave
equation
\bea
\lb{wave_psi}
\Box_g \psi=0
\eea 
on Rindler backgrounds is intimately related to a better understanding of the equivalence principle \cite{aichelburg} and the Unruh effect \cite{unruh} which is often seen as a flat spacetime proxy for the Hawking radiation close to black hole event horizons. Therefore, solutions to the wave equation in Schwarzschild  spacetime under suitable approximations must resemble solutions to the wave equation in Rindler spacetime.

On one hand, the analysis of solutions to the massless scalar wave equations in the exterior region of  Schwarzschild spacetime was a first step towards understanding the non-linear behaviour of Einstein's equations.  Kay and Wald were the first to prove uniform boundedness of solutions up to and including the event horizon \cite{boundnesskay}.  Later on,  Dafermos and Rodnianski introduced robust physical space based methods in order to obtain non-degenerate energy and pointwise decay estimates in the exterior region incluiding the event horizon and future null infinity \cite{redshift, rp, lectures}. In the last decades  immense progress has been made  in the analysis of linear and non-linear wave equations in the exterior of Schwarzschild \cite{klai, semilinear, blue,  holl, hung, hintz, nonlinear}. 

On the other hand, massless scalar wave equations in Rindler spacetime has been mainly analysed from the perspective of  quantum field theory (see the review  \cite{unruhreview, unruhreview2} and references therein).  The analysis in these cases  is done via  mode decomposition. These decompositions has been useful to construct physical quantum states in the presence of Killing horizons \cite{fulling2, kaywald} and has led to general constructions based on microlocal techniques \cite{rad, wrochna, junker, adiabatic}. Furthermore,  Higuchi, Iso, Ueda and Yamamoto have  shown how different modes entangle and their role in the origin of the  radiation \cite{entanglement}. 

{
In this article,  based on the Schwarzschild case, we define a future directed timelike vector field which captures the energy flux associated to local observers at the Rindler horizon and   the energy flux associated to  accelerated observers  in the far away region. Furthermore, we show arbitrary polynomial energy decay of solutions with compactly supported initial data in $1+1$ Rindler spacetime (Theorem \ref{main}). 

}



\subsection{Outline of the paper}
In Section 2, we introduce the necessary preliminaries. We review the vector field method, introduce the coordinates and establish the general notation.  In Section 3, we state the main vector fields used for the proof of energy decay. In Section 4, we show several estimates related to different region of the spacetime and show the main theorem (Theorem \ref{main}). {In Section 5, we discuss the similarities and differences of our results compared to the Schwarzschild scenario.
}

\section{Preliminaries}

\subsection{Vector field method and energy currents}
\lb{vectorfieldmethod}

In the following section we will briefly review the vector field method which we are going to use as an essential tool throughout this work.
The solutions of the  wave equation \eqref{wave_psi}  have an associated  symmetric stress-energy tensor  given by
\be
\lb{energymomentum}
T_{\mu\nu}(\psi)=\partial_\mu\psi\partial_\nu\psi-\frac12g_{\mu\nu}g^{\alpha\beta}\partial_\alpha\psi
\partial_\beta\psi.
\ee
Since $\psi$ is a solution to \eqref{wave_psi} it follows
\begin{equation}
\label{divfree}
\nabla^\mu T_{\mu\nu}=0.
\end{equation}

By contracting the energy-momentum tensor with a vector field $V$, we define the current
\be
\lb{J}
J_\mu^V({\psi})\doteq T_{\mu\nu}({\psi}) V^\nu.
\ee
If the vector field $V$ is timelike, then the one-form $J_\mu^V$ can be interpreted as an energy flux.
In combination with the divergence theorem for a spacetime region $\mathcal{D}$ which is bound by two homologous hypersurfaces, $\Sigma_{\tau}$ and $\Sigma_0$, we have
\begin{equation}
\label{divthe}
\int_{\Sigma_{\tau}} J^V_\mu ({\psi})n^\mu_{\Sigma_{\tau}} \dV_{\Sigma_{\tau}}
+\int_{\mathcal{B}} \nabla^\mu J_\mu({\psi}) \dV=
\int_{\Sigma_0} J^V_\mu({\psi}) n^\mu_{\Sigma_0} \dV_{\Sigma_0}.
\end{equation}
The vector $n^{\mu}_{\Sigma}$ denotes the normal to the subscript hypersurface $\Sigma$ oriented according to Lorentzian geometry convention.
 Further, $\dV$ denotes the volume element over the entire spacetime region and $\dV_{\Sigma}$ the volume elements on $\Sigma$, respectively.
We have that the divergence of the current \eqref{J} is given by
\be
\lb{divergenc}
\nabla^{\mu}J_{\mu}=\nabla^{\mu}(T_{\mu\nu}V^{\nu})=T_{\mu\nu}(\nabla^{\mu} V^{\nu})+(\nabla^{\mu}T_{\mu\nu})V^{\nu}.
\ee
We decompose the divergence into two terms,
\be
\lb{K}
K^V({\psi})\doteq T({\psi})(\nabla V)=(\pi^V)^{\mu\nu}T_{\mu\nu}({\psi}),
\ee
where $(\pi^V)^{\mu\nu} \doteq \frac{1}{2} (\cL_V g)^{\mu\nu}$ is the so called deformation tensor
along $V$,
and
\be
\lb{E}
\cE^V({\psi})\doteq (\nabla^{\mu}T_{\mu\nu})V^{\nu}=(\Box_g {\psi})V({\psi}).
\ee
Thus
\be
\lb{diver}
\nabla^{\mu}J_{\mu}=K^V+\cE^V.
\ee
From \eqref{K} we see that $K^V$ is zero in case that our multiplier is a Killing vector field and $\cE^V$ is zero if ${\psi}$ is a solution to the homogeneous wave equation


\subsection{The Rindler solution}
\lb{rindler_intro}
In the following, we will briefly recall Rindler spacetime, a solution to the Einstein vacuum field equations represented as 
\begin{equation}
    R_{\mu\nu} = 0,
\end{equation}
where \( R_{\mu \nu} \) is the Ricci tensor. This spacetime solution is crucial for understanding a family of uniformly accelerated observers in flat spacetime, and it is noteworthy that the Rindler spacetimes are isometric to a subset of the Minkowski spacetime. Uniform acceleration in relativity is defined as constant proper acceleration, which refers to acceleration as experienced from the observer's frame of reference. For those unfamiliar with Rindler spacetime, references such as \cite{muka,socolov} provide a concise overview.

\subsubsection{The metric, ambient differential structure and Killing vector fields}
\lb{structure}
To set the semantic convention, whenever we refer to the Rindler solution $(\cM, g)$ we mean the right wedge \mbox{$\cR= \cJ^-(\cR\cH_A^+)\cap \cJ^+(\cR\cH_A^-)$} of the spacetime, whose maximally analytic extension is the $1+1$ Minkowski spacetime, see Figure \ref{rindler}. Since this paper deals exclusively with the right wedge, we will for simplicity in the following refer to $\cR\cH_A$ just by using $\cR\cH$. 
{\begin{figure}[!ht]
\centering
\includegraphics[width=0.4\textwidth]{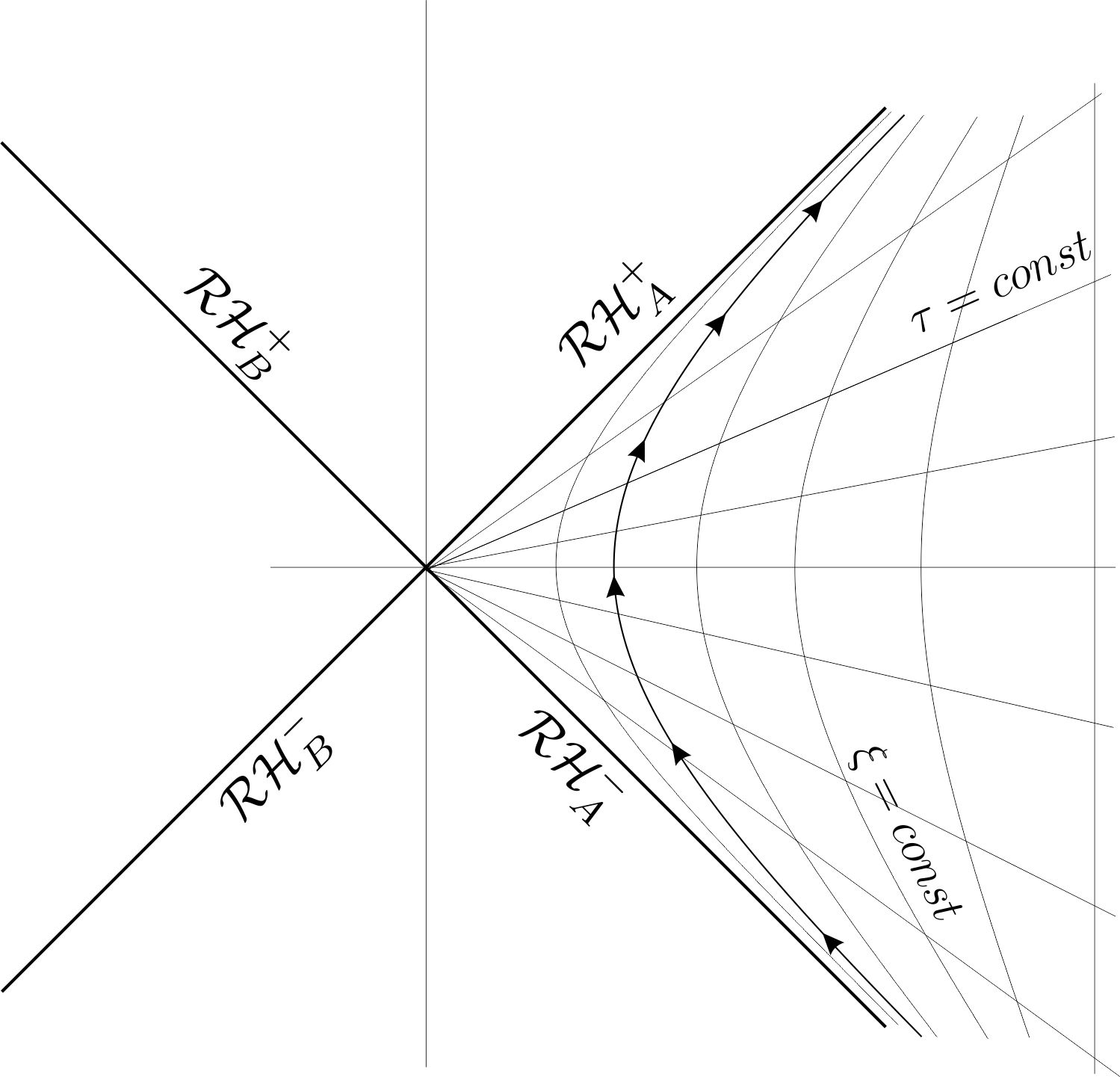}
\caption[]{Sketch of the Rindler spacetime with maximal analytic extension shown}
\label{rindler}\end{figure}}
The manifold $\cM$ can be expressed by 
\bea
\lb{MII}
\cM=\cQ\cup\cR\cH^+\cup\cR\cH^-  , \quad \mbox{with} \quad \cQ=(-\infty, \infty)\times(-\infty, \infty),
\eea
 with the retarded and advanced double-null coordinates $u,v \in (-\infty, \infty)$, which we will introduce in Section \ref{dn_section}.
We can formally parametrize the future and past Rindler horizon by
\bea
\lb{horizon_form}
\cR\cH^+&=&\left\{\infty\right\}\times (-\infty, \infty),\\
\cR\cH^-&=&(-\infty, \infty)\times \left\{-\infty\right\},
\eea
as depicted in Figure \ref{rindler}. Notice that in Minkowski double null coordinates $U,V$ (see {Appendix A}) we have 
\bea
\lb{horizon_form2}
\cR\cH^+&=&\left\{0\right\}\times [0, \infty),\\
\cR\cH^-&=&(-\infty, 0]\times \left\{0\right\},
\eea

\subsubsection{Proper coordinates of uniformly accelerated motion}
In the following we will derive Rindler coordinates simply by defining the trajectory of an uniformly accelerated observer in Minkowski spacetime with the metric
\bea
\lb{Min_metric}
\md s^2=-\md t^2+ \md x^2.
\eea
We refer to the $(t,x)$-coordinates as the laboratory frame. In the following we are interested in the observer's proper frame which moves with the observer and is given in  $(\tau,\xi)$-coordinates. Without loss of generality we assume that our observer moves in $x$-direction. 
Using inverse Lorentz transformations and the comoving frame, which is the inertial frame for a given time at which the observer is instantaneously at rest, Mukhanov and Winitzki, \cite{muka} derive the following relations 
\bea
\lb{t-coord}
t(\tau, \tilde{\xi})&=\frac{1+a\tilde{\xi}}{a} \sinh (a\tau), &\qquad -\infty<t<\infty,\\
\lb{x-coord}
x(\tau, \tilde{\xi})&=\frac{1+a\tilde{\xi}}{a} \cosh (a\tau), &\qquad \quad 0<x<\infty,
\eea
which cover only the subdomain $x>|t|$ of Minkowski space, with the horizon $\cR\cH^+$ corresponding to $t=x$ and $\cR\cH^-$ corresponding to $t=-x$, see Figure \ref{rindler}.  
Further, we have
\bea
\lb{tau-coord}
\tau(t, x)&=\frac{1}{2a} \ln{\frac{x+t}{x-t}}, \qquad \quad &\qquad -\infty<\tau<\infty\\
\lb{tilde_xi-coord}
\tilde{\xi}(t, x)&=-\frac{1}{a} +\sqrt{x^2-t^2}, &\qquad -\frac{1}{a}<\tilde{\xi}<\infty.
\eea
In these coordinates, the horizon $\cR\cH$ corresponds to $\tilde{\xi}=-\frac{1}{a}$. The interpretation is that the accelerated observer cannot measure distances longer than $\frac{1}{a}$ in the opposite direction of their acceleration and can therefore not receive signals from beyond $\cR\cH$.
Relations \eqref{t-coord} and \eqref{x-coord} lead to the Rindler metric 
\bea
\lb{metric_erst}
\md s^2=-(1+a\tilde{\xi})^2\md \tau^2+ \md \tilde{\xi}^2.
\eea 

Using the transformation
\bea
\lb{xi_tilde}
\xi=\frac{1+a\tilde{\xi}}{a}, \qquad 0<{\xi}<\infty,
\eea 
with ${\xi}=0$ corresponding to $\cR\cH$, 
we obtain the simplified metric
\bea
\lb{metric}
\md s^2=-a^2\xi^2\md \tau^2+ \md \xi^2.
\eea

\subsubsection{Double-null coordinates}
\lb{dn_section}
In double null coordinates $(u,v)$
\bea
\lb{u-v-coords}
u(\tau, \xi)&=\tau-\frac{1}{a}\ln\xi, &\qquad -\infty<u<\infty,\\
v(\tau, \xi)&=\tau+\frac{1}{a}\ln\xi, &\qquad -\infty<v<\infty,
\eea
where $u=\infty$ corresponds to $\cR\cH^+$ and $v=-\infty$ corresponds to $\cR\cH^-$. Further, we have\footnote{Note, that in Rindler spacetime, the $r$ coordinate is analogous to the familiar Tortoise coordinate, usually denoted by $r_{\star}$, of for example Schwarzschild spacetime and therefore also takes negative values.}
\bea
\lb{r-t-coords}
r(u,v)&=&\frac{v-u}{2}=\frac{1}{a}\ln\xi, \qquad -\infty<r<\infty,\\
\tau(u,v)&=&\frac{v+u}{2},  \qquad -\infty<\tau<\infty.
\eea
The metric \eqref{metric} then transforms into
\bea
\lb{doublmetric}
\md s^2=-e^{a(v-u)}\md u \md v.
\eea
To better understand the geometric meaning of the coordinates $r$ and $\tau$, the reader may refer for example to Figure \ref{bounded_energy}. 

\subsubsection{$(r, \tau)$-coordinates}
Note that using Section \ref{dn_section} we have
\bea
u(r, \tau)=\tau-r, \qquad v(r, \tau)=r+\tau,
\eea
which leads to the metric in 
in $(r, \tau)$-coordinates 
\bea
\lb{r-tau-metric}
\md s^2=e^{2ar}\left[ -\md \tau^2+ \md r^2\right].
\eea

\subsection{Notation}
\lb{nota}
In the following we will use the notations given below, for the level sets of the coordinate functions $u,v,r$ and $\tau$:
\begin{eqnarray*}
\label{levelSets}
\cC_{v_1}=\{p\in\cM \;|\;v=v_1\}\;, \\
{\cal C}_{u_1}=\{p\in\cM \;|\;u=u_1\}\;, \\
{\cal S}_{r}=\{p\in\cM \;|\;r=r_1\}\;, \\
\Sigma_{\tau_1}=\{p\in\cM \;|\;\tau=\tau_1\} \;.
\end{eqnarray*}
We will also write
$$\cC_{v}(u_1,u_2)=\{p\in\cM \;|\; v(p)=v_1\text{ and } u_1\leq u(p)\leq u_2\}\;.$$

\begin{itemize}
 \item The volume element associated to $g$ in $(u, v)$-coordinates is given by 
\bea
\lb{vol_ele}
\dV=\frac{e^{a(v-u)}}{2}\md u\md v, 
 \eea

\item On the null components $\cC_v$ and $\cC_u$, there is no natural choice of normal vector or volume form. So choosing for example $n_{\cC_{v}}$ to be any
future directed vector orthogonal to $\cC_v$, then $\dV_{\cC_{v}}$ is completely determined by Stokes' Theorem; for instance we have the following choices 
\bea
\lb{null_normal_u}
&n^\mu_{\cC_u}=2e^{-a(v-u)}\partial_v, \quad \mbox{with} \quad &\dV_{\cC_u}=\frac{e^{a(v-u)}}{2}\md v\\
\lb{null_normal_v}
&n^\mu_{\cC_v}=2e^{-a(v-u)}\partial_u, \quad \mbox{with} \quad & \dV_{\cC_v}=\frac{e^{a(v-u)}}{2}\md u,
\eea
%
%
%
%
\item Further, we have for the future timelike normal vector on a constant $\tau$ hypersurface and the outward pointing spacelikes normal vectors on a constant $r$ hypersurface
\bea
\lb{null_normal_tau}
&n^\mu_{\tau=c}=e^{-\frac{a(v-u)}{2}}\left(\partial_u+\partial_v\right), \quad \mbox{with} \quad & \dV_{\tau}=\frac{e^{\frac{a(v-u)}{2}}}{2}(\md u+\md v),\\
\lb{normal_r}
& n^\mu_{r=c}=\pm e^{-\frac{a(v-u)}{2}}\partial_{r}, \quad \mbox{with} \quad &\dV_r=\frac{{e^{+\frac{a(v-u)}{2}}}}{2}(\md v-\md u),
\eea

\item Using regular double null coordinates {(Appendix \ref{reg_dn})} we define the normal on  $\cR\cH^+$ and the volume element by 
\bea
\lb{normalhorizon}
&n^\mu_{\cR\cH^+}=\partial_V, \quad \mbox{with} \quad & \dV_{\cR\cH^+}=dV,
\eea
and similarly by taking the usual Minkowski coordinates  we define via a limiting process the normal vector  and volume form in $\cI^+$ by
\bea
\lb{normalinfinity}
&n^\mu_{\cI^+}=\partial_U, \quad \mbox{with} \quad & \dV_{\cI^+}=dU,
\eea
\end{itemize}
Now we define the foliations which we  will use to bound certain regions where we will apply the divergence theorem.
{For arbitrary $r^*<\tilde{r}\in \mathbb{R}$} we define \bea\Sigma_{\tau_0}^{r^*,\tilde{r}}:= \begin{cases} 
      \Sigma_{\tau=\tau_0} & r^*<r<\tilde{r} \\
       \cC_{v=\tau_0-r^*} &  r\le r^* \\
       \cC_{u=\tau_0+\tilde{r}} & \tilde{r}\ge r
   \end{cases}\quad,
\eea
and 
\bea \Sigma_{\tau_0}^{r^*}:=\begin{cases} 
       \Sigma_{\tau=\tau_0} & r\ge r^*\\
       \cC_{v={\tau_0}-r_0} &  r<r^* 
        \end{cases}\quad.
\eea
In these hypersurfaces we have  the normal vectors
\bea n^\mu_{\Sigma_{\tau_0}^{r^*,\tilde{r}}}:= \begin{cases} 
     n^\mu_{ \Sigma_{\tau=\tau_0}} & r^*<r<\tilde{r} \\
       n^\mu_{\cC_{v=\tau_0-r^*}} &  r\le r^* \\
       n^\mu_{\cC_{u=\tau_0+\tilde{r}}} & \tilde{r}\ge r
   \end{cases}\quad,
\eea
and 
\bea n^\mu_{\Sigma_{\tau_0}^{r^*}}:=\begin{cases} 
       n^\mu_{\Sigma_{\tau=\tau_0}} & r\ge r^*\\
       n^\mu_{\cC_{v=\tau_0-r_0}} &  r<r^* 
        \end{cases}\quad.
\eea
Notice that for fixed $r^*,\tilde{r}$  we have ${\cal{M}}=\displaystyle{\cup_{\tau\in \mathbb{R}}}\Sigma_{\tau}^{r^*,\tilde{r}}=\displaystyle{\cup_{\tau\in \mathbb{R}}}\Sigma_{\tau}^{r^*}$
Therefore, we have 
\bea
&\int_{\Sigma_{\tau_0}^{r^*}} J^{N}_\mu n^{\mu}_{\Sigma_{\tau_0}^{r^*}}\dV_{\Sigma_{\tau_0}^{r^*}}= \\
&\int_{\cC_{v_0}(u_0(r^*),\infty)} J^{N}_\mu n^{\mu}_{\cC_{v_0}} \dV_{\cC_{v_0}}+\int_{\Sigma_{\tau_0}\cap\{r^*<r<\tilde{r}\}} J^{N}_\mu n^{\mu}_{\Sigma_{\tau_0}}\dV_{\Sigma_{\tau_0}}+\int_{\cC_{u_0}(v_0(\tilde{r}),\infty)} J^{N}_\mu n^{\mu}_{\cC_{u_0}} \dV_{\cC_{u_0}} \nonumber
\eea
where $u_0(r^*)$ denotes the value of the $u$ variable  at the point of intersection of   $\Sigma_{\tau_0}$ and ${\cal S}_{r^*}$. Similarly,  $v_0(\tilde{r})$ denotes the value of the $v$ variable  at the point of intersection of   $\Sigma_{\tau_0}$ and ${\cal S}_{\tilde{r}}$.

\section{The {V}ector Fields}

\subsection{The Killing vector $\partial_{\tau}$}

The future directed  Killing vector 
{
\bea
\lb{killing_guy}
\partial_{\tau}=a\left[x \partial_t+t \partial_x\right]
\eea
}
{for $a>0$ is associated with the flow of accelerated observers} through spacetime. It  satisfies 
\bea
\lb{T_rel}
\nabla_{\partial_{\tau}} \partial_{\tau}|_{\cR\cH^+}=a\partial_{\tau},
\eea
which defines the surface gravity $a$ at $\cR\cH^+$.
 Furthermore, the quantity $J^{\partial_\tau}_\mu({\psi})$ can be interpreted as the energy flux related to the observer accelerating at a rate $a$. 
Since $\partial_\tau$ is Killing, it follows from \eqref{K} that $K^{\partial_\tau}=0$. So, for solutions of the wave equation on any compact region $\mathcal{D}$, the energy is conserved, i.e.
$$E_{\partial\mathcal{D}}:=\int_{\partial\mathcal{D}} J^{\partial_\tau}_\mu({\psi}) n^\mu_{\partial\mathcal{D}} \dV_{\partial\mathcal{D}}=0,$$
according to \eqref{diver}.

Further, we have for constant $\tau$ hypersurfaces using \eqref{JX_tau}
\bea
\lb{Jtau-tau}
J_{\mu}^{\partial_{\tau}} n^{\mu}_{\Sigma_{\tau}}= \frac{e^{-ar}}{2}\left[(\partial_{\tau} \psi)^2+(\partial_{r} \psi)^2\right],
\eea
and $\dV_{\Sigma_{\tau}}=e^{ar} \md r$.
For  constant $u$ hypersurfaces using \eqref{tensor_uv}, and $\eqref{JX_u}$ we have 
\bea
\lb{Jtau-u}
J_{\mu}^{\partial_{\tau}} n^{\mu}_{\cC_{u}}=e^{-a(v-u)}(\partial_v\psi) ^2,
\eea
and $\dV_{\cC_{u}}=\frac{e^{a(v-u)}}{2}\md v,$.
For constant $v$ hypersurfaces 
\bea
\lb{Jtau-v}
J_{\mu}^{\partial_{\tau}} n^{\mu}_{\cC_{v}}=e^{-a(v-u)}(\partial_u\psi) ^2
\eea
and $\dV_{\cC_{v}}=\frac{e^{a(v-u)}}{2}\md u$.
Moreover, we have the following proposition.
\begin{prop}\lb{energy-conservation1}
Let $r^*,\tilde{r},t_0\in \mathbb{R}$ with $r^*<\tilde{r}$ and  $\psi$ be a solution to \eqref{wave_psi} with compactly supported data \footnote{The compactness of the data  should be consider with respect to regular coordinates at the horizon. In particular, the compactness is not with respect the variable $r$ .} on $\Sigma_{\tau_0}^{r^*,\tilde{r}}$ or in $\Sigma_{\tau_0}$ not necessarily vanishing at $\cR\cH^+$. Then the following statements hold
\bea\lb{boude1}
\int_{\sig{\tau_1}{r^*}{\tilde{r}}} J^{\partial_{\tau}}_\mu n^{\mu}_{\sig{\tau_1}{r^*}{\tilde{r}}}  \le B \int_{\sig{\tau_0}{r^*}{\tilde{r}}} J^{\partial_{\tau}}_\mu n^{\mu}_{\sig{\tau_0}{r^*}{\tilde{r}}},
\eea
\bea\lb{bounde2}
\int_{\Sigma_{\tau_1}^{r^*}} J^{\partial_{\tau}}_\mu n^{\mu}_{\Sigma_{\tau_1}^{r^*}} \le B \int_{\Sigma_{\tau_0}^{r^*}} J^{\partial_{\tau}}_\mu n^{\mu}_{\Sigma_{\tau_0}^{r^*}},
\eea
\bea\lb{bounde3}
\int_{\sig{\tau_0}{r^*}{\tilde{r}}} J^{\partial_{\tau}}_\mu n^{\mu}_{\sig{\tau_0}{r^*}{\tilde{r}}} \le B \int_{\Sigma_{\tau_0}^{r^*}} J^{\partial_{\tau}}_\mu n^{\mu}_{\Sigma_{\tau_0}^{r^*}},
\eea
and
\bea\lb{bounde4}
\int_{\sig{\tau_1}{r^*}{\tilde{r}}} J^{\partial_{\tau}}_\mu n^{\mu}_{\sig{\tau_1}{r^*}{\tilde{r}}} \le B \int_{\Sigma_{\tau_0}} J^{\partial_{\tau}}_\mu n^{\mu}_{\Sigma_{\tau_0}},
\eea
\end{prop}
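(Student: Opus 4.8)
The plan is to deduce all four estimates from a single mechanism: the $\partial_\tau$--flux is exactly conserved, its boundary densities are pointwise nonnegative, and the divergence theorem \eqref{divthe} then compares the fluxes across a suitably chosen region $\cD$, in fact with $B=1$.

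First I would record the two structural facts. Since $\partial_\tau$ is Killing, $(\pi^{\partial_\tau})^{\mu\nu}=\tfrac12(\cL_{\partial_\tau}g)^{\mu\nu}=0$, so $K^{\partial_\tau}=0$ by \eqref{K}; since $\psi$ solves \eqref{wave_psi}, $\cE^{\partial_\tau}=(\Box_g\psi)\,\partial_\tau\psi=0$ by \eqref{E}; hence $\nabla^\mu J^{\partial_\tau}_\mu=0$ identically by \eqref{diver}. Next, every boundary density appearing below is nonnegative: on the constant--$\tau$, constant--$u$ and constant--$v$ legs this is immediate from the sum--of--squares formulas \eqref{Jtau-tau}, \eqref{Jtau-u}, \eqref{Jtau-v}, while on $\cR\cH^+$ and on $\cI^+$ one computes from \eqref{normalhorizon}, \eqref{normalinfinity} and $\partial_\tau=a(V\partial_V-U\partial_U)$ (cf.\ \eqref{killing_guy}) that $J^{\partial_\tau}_\mu n^\mu_{\cR\cH^+}=aV(\partial_V\psi)^2\ge 0$ and $J^{\partial_\tau}_\mu n^\mu_{\cI^+}=-aU(\partial_U\psi)^2\ge 0$, using $V\ge 0$ on $\cR\cH^+$ and $U\le 0$ on $\cI^+$. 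Conceptually this is just the statement that $\partial_\tau$ is future causal on $\cR$ and on its causal boundary while all chosen normals are future causal, so the contractions have a sign.

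Then I would run the four cases, drawing the picture in the conformally flat $(r,\tau)$ chart \eqref{r-tau-metric}, in which null curves are straight lines of slope $\pm1$ and each $\sig{\tau}{r^*}{\tilde r}$ is a constant--$\tau$ slab over $r^*<r<\tilde r$ completed by a null ray running to $\cR\cH^+$ and a null ray running to $\cI^+$. For \eqref{boude1}, when $\tau_1\ge\tau_0$ the surface $\sig{\tau_1}{r^*}{\tilde r}$ is the $\tau$--translate of $\sig{\tau_0}{r^*}{\tilde r}$ by $\tau_1-\tau_0$ and hence lies in its causal future; let $\cD$ be the region they enclose, whose remaining boundary is a segment of $\cR\cH^+$ (reached as $r\to-\infty$) together with a segment of $\cI^+$ (reached as $r\to+\infty$). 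For \eqref{bounde3}, $\sig{\tau_0}{r^*}{\tilde r}$ and $\Sigma_{\tau_0}^{r^*}$ coincide for $r\le\tilde r$, and for $r\ge\tilde r$ the null leg of $\sig{\tau_0}{r^*}{\tilde r}$ lies to the future of the piece $\{\tau=\tau_0,\ r\ge\tilde r\}$ of $\Sigma_{\tau_0}^{r^*}$; take $\cD$ to be the region between these two pieces, capped by a segment of $\cI^+$. For \eqref{bounde2} and \eqref{bounde4}, when $\tau_1\ge\tau_0$ the surface $\Sigma_{\tau_1}^{r^*}$, resp.\ $\sig{\tau_1}{r^*}{\tilde r}$, lies entirely in the causal future of $\Sigma_{\tau_0}^{r^*}$, resp.\ of the full slice $\Sigma_{\tau_0}$ (its null legs issue from the corners of the constant--$\tau_1$ slab and are future directed), and $\cD$ is the enclosed region with lateral boundary segments of $\cR\cH^+$ and $\cI^+$.

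In each case I apply \eqref{divthe} to $\cD$: the bulk term vanishes since $\nabla^\mu J^{\partial_\tau}_\mu=0$, so the flux through the ``later'' hypersurface plus the (nonnegative) fluxes through the intervening segments of $\cR\cH^+$ and $\cI^+$ equals the flux through the ``earlier'' one; discarding the boundary--at--infinity terms yields \eqref{boude1}--\eqref{bounde4} with $B=1$, and hence with any $B\ge 1$. The step I expect to be the real obstacle is justifying this last application: since $\psi$ need not vanish near $\cR\cH^+$ or $\cI^+$, those portions of $\partial\cD$ are attained only in a limit, and the corresponding flux integrals must be shown to converge --- which is exactly where the compact--support hypothesis on the data (in coordinates regular at $\cR\cH^+$) is used. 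In $1+1$ dimensions this is transparent: in the coordinates \eqref{doublmetricreg2} the general solution of \eqref{wave_psi} is $\psi=F(U)+G(V)$ with $F'$ and $G'$ of compact support, so $\partial_U\psi$ and $\partial_V\psi$ extend to $\cI^+$ and $\cR\cH^+$ with compactly supported traces, all boundary integrals are finite, and the divergence theorem across the null legs and the conformal--boundary segments follows by an elementary limiting argument. (The solution itself is furnished by well--posedness of the characteristic/mixed initial value problem on the weakly spacelike hypersurfaces $\sig{\tau_0}{r^*}{\tilde r}$, $\Sigma_{\tau_0}^{r^*}$ and $\Sigma_{\tau_0}$.)
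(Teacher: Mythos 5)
Your argument is correct and follows essentially the same route as the paper: apply the divergence theorem for the conserved $J^{\partial_\tau}$ current to the region between the two foliation surfaces, note that the fluxes through $\cR\cH^+$ and $\cI^+$ (reached via an exhaustion/limiting argument) are nonnegative, and discard them. Your explicit sign computation $J^{\partial_\tau}_\mu n^\mu_{\cR\cH^+}=aV(\partial_V\psi)^2$, $J^{\partial_\tau}_\mu n^\mu_{\cI^+}=-aU(\partial_U\psi)^2$ and the d'Alembert justification of the limits are slightly more careful than the paper's brief assertion, but the mechanism is identical.
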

\begin{proof}
We start by proving Eq. \eqref{boude1}.

Consider the region ${\cal{D}}=\left\{D^+(\sig{\tau_0}{r_0}{\tilde{r}})\cap D^-(\sig{\tau_1}{r_0}{\tilde{r}})\right\}$, where $D^{\pm}$ are the future/past domain of dependence, respectively.  Applying the divergence theorem \footnote{The divergence theorem cannot be directly applied, since the domain is not compact. Therefore, we applied it by taking a sequence of compact regions and taking the limit.} gives
\bea
&\int_{\sig{\tau_1}{r^*}{\tilde{r}}} J^{\partial_{\tau}}_\mu n^\mu_{\sig{\tau_1}{r^*}{\tilde{r}}}\dV_{\sig{\tau_1}{r^*}{\tilde{r}}} +  \int_{\cRH^+} J^{\partial_{\tau}}_\mu n^\mu_{\cRH^+}\dV_{\cRH^+}+\int_{\cI^+} J^{\partial_{\tau}}_\mu n^\mu_{\cI^+}\dV_{\cI^+}\nonumber\\
&=\int_{\sig{\tau_0}{r^*}{\tilde{r}}} J^{\partial_{\tau}}_\mu n^\mu_{\sig{\tau_0}{r^*}{\tilde{r}}}\dV_{\sig{\tau_0}{r^*}{\tilde{r}}} 
\eea
Now because $\cRH^+$ is a null hypersurface $n^\mu_{\cRH^+}$ is null  and therefore 
\ben
 \int_{\cRH^+} J^{\partial_{\tau}}_\mu n^\mu_{\cRH^+}\dV_{\cRH^+}\ge 0.
\een
Similarly, for $\cI^+$ (which is defined  in the limit sense $v\rightarrow \infty$), we have 
\ben
 \int_{\cI^+} J^{\partial_{\tau}}_\mu n^\mu_{\cI^+}\dV_{\cI^+}:=\displaystyle\lim_{v\rightarrow \infty}\int_{\cC_{v}} J^{N}_\mu n^{\mu}_{\cC_{v}} \dV_{\cC_{v}} \ge 0.
\een
Hence,
\bea\label{divergence10}
&\int_{\sig{\tau_1}{r^*}{\tilde{r}}} J^{\partial_{\tau}}_\mu n^\mu_{\sig{\tau_1}{r^*}{\tilde{r}}}\dV_{\sig{\tau_1}{r^*}{\tilde{r}}} \le\int_{\sig{\tau_0}{r^*}{\tilde{r}}} J^{\partial_{\tau}}_\mu n^\mu_{\sig{\tau_0}{r^*}{\tilde{r}}}\dV_{\sig{\tau_0}{r^*}{\tilde{r}}} 
\eea
The proof of Eq.\eqref{bounde2}, Eq.\eqref{bounde3} and Eq.\eqref{bounde4} is analogous. 
\end{proof}


\subsection{The redshift vector field  $Y$ and the local energy vector field $N$}
{
In \cite[Corollary 3.1]{lectures}, Dafermos and Rodnianski construct a redshift vector field $Y$ and vector field $N$  in Schwarzschild spacetime that captures the local observers energy. An equivalent statement holds in Rindler spacetime which we state below. 

First, we define the redshift vector field tailored for the Rindler Horizon. \\

}
{\bf{The vector field  $Y$}}

Let $Y=(1+\delta_1 (e^{a(v-u)}))\hat{Y}+(\delta_2e^{a(v-u)}) \partial_\tau$ where $\hat{Y}=e^{-a(v-u)}\partial_u$.
Then $Y$ satisfies for suitable $\delta_1,\sigma >0$
\begin{enumerate}
\item $Y$ is $\phi_\tau$ invariant; i.e. $[\partial_\tau,Y]=0$
\item $g(Y,Y)|_{\cRH^+}=0, \qquad g(\partial_\tau,Y)|_{\cRH^+}=-\delta_1$
\item $\nabla_Y Y=-\sigma (Y+\partial_\tau)$
\end{enumerate}
Moreover, the following holds on $\cRH^+$
\ben
&\nabla_{\partial_\tau} Y=-aY\\
&\nabla_Y Y=-\sigma(Y+\partial_\tau)\\
\een

{\bf{The vector field  $N$}}

\begin{prop}
\lb{mi}
Let $\psi$ be a solution of $\Box_g \psi=0$. Given $\epsilon > 0$, there
exist $r_0, R \in (-\infty,\infty)$ and a $\tau$-invariant future directed timelike vector field $N$
such that the following holds
\bea
\lb{energy_control}
(a-\epsilon)J_{\mu}^N(\psi) n^{\mu}_{\cC_{v_*}} \leq K^N(\psi)  \quad r\le r_0\label{rr}\\
c J_{\mu}^{\partial_\tau}(\psi) n^{\mu}_{\Sigma_{\tau_*}}\le J_{\mu}^N(\psi) n^{\mu}_{\Sigma_{\tau_*}} \le C J_{\mu}^{\partial_\tau}(\psi) n^{\mu}_{\Sigma_{\tau_*}} \quad r_0\le r\le R \label{int}\\
|K^N|\le\tilde{C} J_{\mu}^{\partial_\tau}(\psi) n^{\mu}_{\Sigma_{\tau_*}} \quad r_0 \le r\le R\label{bulkybulk}\\
J_{\mu}^N(\psi) n^{\mu}_{\cC_{u_*}}=J_{\mu}^{\partial_\tau}(\psi) n^{\mu}_{u=const} \quad R\le r
\eea
for constants $c,C,\tilde{C}$.
\end{prop}
{
\begin{proof}
Let  $N:=Y+\partial_\tau$. As in the Schwarzschild case, one obtains $(a-\epsilon)J_{\mu}^N(\psi) n^{\mu}_{\cC_{v_*}} \leq K^N(\psi)$ on ${\cRH^+}$. By continuity we can extend this estimate to an open neighbourhood which determines $r_0$.  Moreover, by compactness we obtain Eq.\eqref{int} and Eq.\eqref{bulkybulk}. Finally using suitable cutoffs one can extend the vector field smoothly to $\partial_\tau$ for $R\le r$.
\end{proof}
}







\subsection{The vector field $X$}
\lb{X-section}
{
In this section we define a vector field  $X$ which will be useful to obtain Integrated Local Energy Decay estimates. The ansatz of the vector field goes back to Morawetz \cite{mora}.
}

In the following analysis we use $(r, \tau)$-coordinates.
Now we choose a multiplier such that $K^X$ is positive, with a general ansatz
\bea
\lb{Xf}
X_f=f(r)\partial_r.
\eea

Using the above vector field multiplier in \eqref{K_current_rtau} with 
\bea
\lb{function}
f(r)&=& \tanh(ar)+ b\\
f'(r)&=&\frac{a}{\cosh^2(r)},
\eea
with $b$ a constant, we get
\bea
\lb{KXf}
K^{X_f}=\quad \frac{a}{2\cosh^2(ar)}e^{-2ar}\left[(\partial_{\tau} \psi)^2+(\partial_{r} \psi)^2\right]
\eea
by using \eqref{K_current_rtau}.
Using the normal vector \eqref{null_normal_tau} and the multiplier 
\eqref{Xf} we get
\bea
\lb{JXf}
J_{\mu}^{X_f} n^{\mu}_{\Sigma_{\tau}}= -\left(\tanh(ar)+b\right)e^{-ar}(\partial_{\tau} \psi\partial_{r} \psi).
\eea
Using the normal vector \eqref{normal_r} and the multiplier 
\eqref{Xf} we get
\bea
\lb{JXf2}
J_{\mu}^{X_f} n^{\mu}_{\Sigma_{r}}= \left(\tanh(ar)+b\right)\frac{ e^{-ar}}{2}[(\partial_{\tau} \psi)^2+(\partial_{r} \psi)^2].
\eea

\subsection{The vector field $r^p \partial_v$  }
{
Analogous to the $r^p$ estimates of \cite{rp} (see also\cite{mos}) where this vector field  is used to obtain decay in the far away region, we define the vector field multiplier
}
\bea
\lb{Vvectorfield}
V=r^p \partial_v
\eea
and obtain
\bea
\lb{V}
K^{V}=pr^{p-1}e^{-a(v-u)}(\partial_v\psi)^2
\eea
Using the normal vector \eqref{null_normal_u} and the multiplier 
\eqref{Vvectorfield}  we get
\be\lb{JV}     
J_\mu^V n^\mu_{\cC_u}=2 r^{p} e^{-a(v-u)} (\partial_v\psi)^2
\ee
and with \eqref{JX_r}
\bea 
\lb{JV_urp}              
J_\mu^Vn^\mu_r&=&\frac{r^pe^{-ar}}{2}\left[(\partial_\tau\psi)^2+(\partial_r\psi)^2+2(\partial_\tau\psi)(\partial_r\psi)\right]\nonumber \\
&=&\frac{r^pe^{-ar}}{2} [\partial_\tau\psi+\partial_r\psi]^2\nonumber\\
&=&2{r^pe^{-ar}}(\partial_v\psi)^2,
\eea

\section{Energy Decay}

\subsection{Boundedness of the $J^N$-Energy.}


{\begin{figure}[ht]
\centering
\includegraphics[width=0.45 \textwidth]{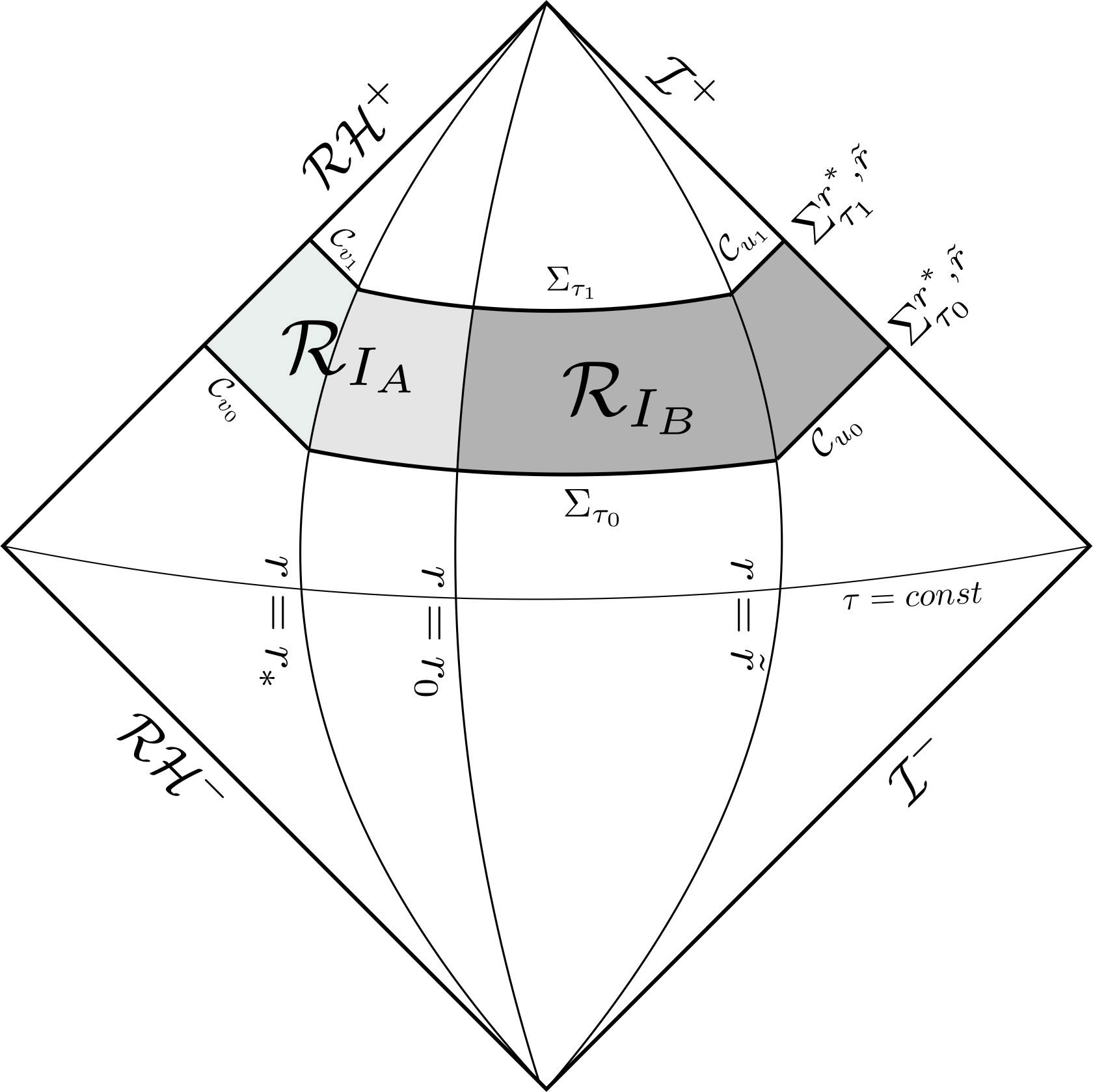}
\caption[]{Sketch of region $\cR_I=R_{I_A} \cup R_{I_B}$ depicted in darker shades}
\label{bounded_energy}\end{figure}}
\begin{prop}\lb{energy-conservation}
Let $r^*,\tilde{r},\tau_0\in \mathbb{R}$ with $r^*<\tilde{r}$ and  $\psi$ be a solution to \eqref{wave_psi} with compactly supported data on $\Sigma_{\tau_0}^{r^*,\tilde{r}}$ or in $\Sigma_{\tau_0}$ not necessarily vanishing at $\cR\cH^+$. Then the following statements hold
\bea\lb{bound1}
\int_{\sig{\tau_1}{r^*}{\tilde{r}}} J^{N}_\mu n^{\mu}_{\sig{\tau_1}{r^*}{\tilde{r}}} \le B \int_{\sig{\tau_0}{r^*}{\tilde{r}}} J^{N}_\mu n^{\mu}_{\sig{\tau_0}{r^*}{\tilde{r}}},
\eea
\bea\lb{bound2}
\int_{\Sigma_{\tau_1}^{r^*}} J^{N}_\mu n^{\mu}_{\Sigma_{\tau_1}^{r^*}} \le B \int_{\Sigma_{\tau_0}^{r^*}} J^{N}_\mu n^{\mu}_{\Sigma_{\tau_0}^{r^*}},
\eea
\bea\lb{bound3}
\int_{\sig{\tau_0}{r^*}{\tilde{r}}} J^{N}_\mu n^{\mu}_{\sig{\tau_0}{r^*}{\tilde{r}}} \le B \int_{\Sigma_{\tau_0}^{r^*}} J^{N}_\mu n^{\mu}_{\Sigma_{\tau_0}^{r^*}},
\eea
and
\bea\lb{bound4}
\int_{\sig{\tau_1}{r^*}{\tilde{r}}} J^{N}_\mu n^{\mu}_{\sig{\tau_1}{r^*}{\tilde{r}}} \le B \int_{\Sigma_{\tau=t_0}} J^{N}_\mu n^{\mu}_{\Sigma_{\tau_0}},
\eea
\end{prop}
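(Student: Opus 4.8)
The plan is to adapt the Dafermos--Rodnianski red-shift argument, with Proposition \ref{mi} in the role of its Schwarzschild analogue. I will describe the argument for \eqref{bound2}; the estimates \eqref{bound1} and \eqref{bound4} follow in the same way using the foliation $\{\sig{\tau}{r^*}{\tilde{r}}\}$, and \eqref{bound3} follows from the divergence theorem on the region between $\sig{\tau_0}{r^*}{\tilde{r}}$ and $\Sigma_{\tau_0}^{r^*}$ together with the positivity of the fluxes through the intervening null legs, exactly as for \eqref{bounde3}.

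First I would apply the energy identity \eqref{divthe} to the current $J^{N}$ on the region $\mathcal{D}$ bounded by $\Sigma_{\tau_0}^{r^*}$ and $\Sigma_{\tau_1}^{r^*}$, realised as a limit of compact subregions as in the proof of Proposition \ref{energy-conservation1}. Since $N$ is future directed timelike and the normals $n_{\cR\cH^+}$, $n_{\cI^+}$ and the null normals of the legs $\cC_v$ are future directed causal, all boundary fluxes except the two spacelike--null caps are non-negative, and the identity becomes
\begin{equation*}
\int_{\Sigma_{\tau_1}^{r^*}} J^{N}_\mu n^{\mu}_{\Sigma_{\tau_1}^{r^*}} \;\le\; \int_{\Sigma_{\tau_0}^{r^*}} J^{N}_\mu n^{\mu}_{\Sigma_{\tau_0}^{r^*}} \;-\; \int_{\mathcal{D}} K^{N}\,\dV .
\end{equation*}
By Proposition \ref{mi}, $K^{N}$ is pointwise non-negative on $\{r\le r_0\}$ and vanishes on $\{r\ge R\}$, while on the compact slab $\{r_0\le r\le R\}$ one has $|K^{N}|\le \tilde{C}\,J^{\partial_\tau}_\mu n^{\mu}_{\Sigma_{\tau}}$. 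Thus only the slab contributes with a bad sign, and it remains to bound $\int_{\mathcal{D}\cap\{r_0\le r\le R\}} J^{\partial_\tau}_\mu n^{\mu}_{\Sigma_{\tau}}\,\dV$ by a multiple of the initial energy that does not depend on $\tau_1$.

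This spacetime estimate is the heart of the matter, and the step I expect to be the main obstacle: bounding the slab integral slice by slice only gives a bound growing linearly (hence, after Gronwall, exponentially) in $\tau_1-\tau_0$, so one needs a genuinely coercive bulk term. I would obtain it from the $X$-field of Section \ref{X-section}: because Rindler has no trapping, the weight in \eqref{KXf} is strictly positive, so $K^{X_f}\ge 0$ everywhere; applying \eqref{divthe} to the future directed timelike multiplier $X_f+C_0\partial_\tau$ with $C_0$ large (e.g. $C_0>1+|b|$), for which $K^{X_f+C_0\partial_\tau}=K^{X_f}$ since $\partial_\tau$ is Killing, and discarding the non-negative future boundary fluxes through $\Sigma_{\tau_1}^{r^*}$, $\cR\cH^+$ and $\cI^+$, yields $\int_{\mathcal{D}} K^{X_f}\,\dV \le \int_{\Sigma_{\tau_0}^{r^*}} J^{X_f+C_0\partial_\tau}_\mu n^{\mu}$. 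On the slab the ratio of \eqref{KXf} to \eqref{Jtau-tau} equals $ae^{-ar}/\cosh^2(ar)$, which is bounded below by a positive constant $c(r_0,R)$, and on $\Sigma_{\tau_0}^{r^*}$ the Cauchy--Schwarz bound $|\partial_\tau\psi\,\partial_r\psi|\le\frac{1}{2}[(\partial_\tau\psi)^2+(\partial_r\psi)^2]$ applied to \eqref{JXf} (together with its analogue on the null legs, cf.\ \eqref{Jtau-v}) gives $J^{X_f+C_0\partial_\tau}_\mu n^{\mu}\le C\,J^{\partial_\tau}_\mu n^{\mu}$; hence
\begin{equation*}
\int_{\mathcal{D}\cap\{r_0\le r\le R\}} J^{\partial_\tau}_\mu n^{\mu}_{\Sigma_{\tau}}\,\dV \;\le\; \frac{1}{c(r_0,R)}\int_{\mathcal{D}} K^{X_f}\,\dV \;\le\; \frac{C}{c(r_0,R)}\int_{\Sigma_{\tau_0}^{r^*}} J^{\partial_\tau}_\mu n^{\mu}_{\Sigma_{\tau_0}^{r^*}} ,
\end{equation*}
uniformly in $\tau_1$.

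Finally I would combine the two displays. Because $Y$, hence $N=Y+\partial_\tau$, is future causal, the dominant energy condition gives $J^{N}_\mu n^{\mu}\ge J^{\partial_\tau}_\mu n^{\mu}$ pointwise on $\Sigma_{\tau_0}^{r^*}$, so
\begin{equation*}
\int_{\Sigma_{\tau_1}^{r^*}} J^{N}_\mu n^{\mu}_{\Sigma_{\tau_1}^{r^*}} \;\le\; \int_{\Sigma_{\tau_0}^{r^*}} J^{N}_\mu n^{\mu}_{\Sigma_{\tau_0}^{r^*}} + \frac{\tilde{C}C}{c(r_0,R)}\int_{\Sigma_{\tau_0}^{r^*}} J^{\partial_\tau}_\mu n^{\mu}_{\Sigma_{\tau_0}^{r^*}} \;\le\; B \int_{\Sigma_{\tau_0}^{r^*}} J^{N}_\mu n^{\mu}_{\Sigma_{\tau_0}^{r^*}}
\end{equation*}
with $B=1+\tilde{C}C/c(r_0,R)$, which is \eqref{bound2}. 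The remaining points to check are the exhaustion argument making \eqref{divthe} legitimate on the non-compact $\mathcal{D}$ (identical to the proof of Proposition \ref{energy-conservation1}) and the sign of the fluxes of $X_f+C_0\partial_\tau$ through every future boundary component, including the null legs.
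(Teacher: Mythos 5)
Your argument is correct in substance, but it closes the estimate by a genuinely different mechanism than the paper. Both proofs start the same way: divergence theorem for $J^{N}$ on the region between the two slices, dropping the non-negative fluxes through $\cR\cH^+$ and $\cI^+$, and splitting the bulk into the redshift region $r\le r_0$ (where $K^{N}\ge 0$ by \eqref{energy_control}) and the slab where \eqref{bulkybulk} controls $|K^{N}|$ by $J^{\partial_\tau}_\mu n^\mu_{\Sigma_\tau}$. The divergence is in how the slab spacetime integral is handled. The paper does \emph{not} invoke the $X_f$ multiplier here at all: it bounds the slab term by $C(\tau_1-\tau_0)$ times the \emph{initial} energy (Proposition \ref{energy-conservation1} plus the coarea formula), keeps the positive redshift bulk $(a-\epsilon)\int_{\tau_0}^{\tau_1}f(\tau)\,\md\tau$ on the good side, and closes with a differential inequality $f'(\tau_1)\le Cf(\tau_0)-(a-\epsilon)f(\tau_1)$, whose integrating factor yields the uniform constant $B$. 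So your aside that the "slice by slice" route leads, after Gronwall, to exponential growth slightly mischaracterizes that alternative: the growth is avoided precisely because the error is proportional to the fixed initial energy and the redshift bulk supplies a damping term. You instead establish a uniform-in-$\tau_1$ integrated local energy estimate on the slab from the $X_f$ multiplier (legitimate here since \eqref{KXf} is everywhere positive--no trapping--and $X_f+C_0\partial_\tau$ with $C_0>1+|b|$ is future timelike, with boundary fluxes controlled by $J^{\partial_\tau}$ via Cauchy--Schwarz on the spacelike part and the null-leg formulas), and then only need $K^{N}\ge0$ near the horizon rather than its quantitative coercivity. What each buys: your route gives the constant immediately, with no ODE argument, but makes Proposition \ref{energy-conservation} depend on the Morawetz-type input of Section \ref{X-section}, which the paper reserves for the intermediate-region estimate (Proposition \ref{decaymedium}); the paper's route keeps this proposition within the redshift plus Killing-energy machinery, mirroring the Schwarzschild argument of Dafermos--Rodnianski. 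One small point to make explicit in your write-up: the vanishing of $K^{N}$ for $r\ge R$ is not literally stated in Proposition \ref{mi} (which only asserts $J^{N}_\mu n^\mu_{\cC_u}=J^{\partial_\tau}_\mu n^\mu$ there); it follows from $N=\partial_\tau$ in that region, an implicit feature of the construction which the paper's own proof also uses tacitly when it controls the whole of $\cR_{I_B}$ by \eqref{bulkybulk}.
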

\begin{proof}
We start by proving Eq. \eqref{bound1}.
Now consider the region $\cR_I=\left\{D^+(\sig{\tau_0}{r_*}{\tilde{r}})\cap D^-(\sig{\tau_1}{r_*}{\tilde{r}})\right\}={\cR_{I_A} \cup \cR_{I_B}}$, where $D^{\pm}$ are the future/past domain of dependence of the underlying hypersurface, respectively. Further, \mbox{$\cR_{I_A}=\{ p \in {\cR_I} \quad \mbox{and}\quad  r\le r_0, \}$} and \mbox{$\cR_{I_B}=\{ p \in {\cR_I} \quad \mbox{and}\quad  r_0\le r, \}$}, see Figure \ref{bounded_energy}.
Applying the divergence theorem gives
\bea
&\int_{\sig{\tau_1}{r^*}{\tilde{r}}} J^N_\mu n^\mu_{\sig{\tau_1}{r^*}{\tilde{r}}}\dV_{\sig{\tau_1}{r^*}{\tilde{r}}} + \int_{\cR_I} K^N\dV_{\cR_I}\\ \nonumber
&+  \int_{\cRH^+} J^N_\mu n^\mu_{\cRH^+}\dV_{\cRH^+}+\int_{\cI^+} J^N_\mu n^\mu_{\cI^+}\dV_{\cI^+}\nonumber\\ \nonumber
&=\int_{\sig{\tau_0}{r_0}{\tilde{r}}} J^N_\mu n^\mu_{\sig{\tau_0}{r^*}{\tilde{r}}}\dV_{\sig{\tau_0}{r_0}{\tilde{r}}}. 
\eea
Now because $\cRH^+$ is a null hypersurface $n^\mu_{\cRH^+}$ is null  and therefore 
\ben
 \int_{\cRH^+} J^N_\mu n^\mu_{\cRH^+}\dV_{\cRH^+}\ge 0.
\een
Similarly, for $\cI^+$, we have 
\ben
 \int_{\cI^+} J^N_\mu n^\mu_{\cI^+}\dV_{\cI^+}\ge 0.
\een
Hence,
\bea\label{divergence1}
&\int_{\sig{\tau_1}{r^*}{\tilde{r}}} J^N_\mu n^\mu_{\sig{\tau_1}{r^*}{\tilde{r}}}\dV_{\sig{\tau_1}{r^*}{\tilde{r}}} + \int_{\cR} K^N\dV_{\cR}\\ \nonumber
&\le\int_{\sig{\tau_0}{r^*}{\tilde{r}}} J^N_\mu n^\mu_{\sig{\tau_0}{r^*}{\tilde{r}}}\dV_{\sig{\tau_0}{r^*}{\tilde{r}}}. 
\eea
Now consider \mbox{$\cR_{I_A}$} and \mbox{$\cR_{I_B}$}, as shown in Figure \ref{bounded_energy}, and write the bulk integral as
\bea
\int_{{\cR_I}} K^N\dV_{{\cR_I}}=\int_{{\cR_{I_A}}} K^N \dV_{{\cR_{I}}}+\int_{\cR_{I_B}} K^N\dV_{\cR_{I_B}},
\eea
which using \eqref{energy_control} of Proposition \ref{mi} and gives
\bea \label{divergence}
&\int_{{\cR_I}} K^N\dV_{{\cR_I}}=\int_{{\cR_{I_A}}} K^N\dV_{{\cR_{I_A}}}+\int_{\cR_{I_B}} K^N\dV_{\cR_{I_B}}\\ \nonumber
&\ge(a-\epsilon) \int_{{\cR_{I_A}}} J_\mu^N n^{\mu}_{{\sig{\tau}{r^*}{\tilde{r}}\cap\{r\le r_0\}}}\dV_{{\cR_{I_A}}}+\int_{\cR_{I_B}} K^N\dV_{\cR_{I_B}}.
\eea
Moreover, notice that $|K^N|$ in ${\cR_{I_B}}$ is controlled by $J^{\partial_\tau}_\mu n^{\mu}_{\Sigma_{\tau_*}}$   from \eqref{bulkybulk} of Proposition \ref{mi}. 
Therefore using  Proposuition \ref{energy-conservation1} and co-area formula, 
\bea
- \int_{\cR_{I_B}} K^N\dV_{\cR_{I_B}}&\le& \int_{\cR_{I_B}} |K^N|\dV_{\cR_{I_B}}\\
&\le& C  \int_{\cR_{I_B}} J^{\partial_\tau}_\mu n^\mu_{\sig{\tau}{r^*}{\tilde{r}}}\dV_{\cR_{I_B}}\\
&\le& C  \int_{{\cR_I}} J_\mu^{\partial{\tau}}n^\mu_{\sig{\tau}{r^*}{\tilde{r}}}\dV_{{\cR_I}}\\
&\le& C  \int_{\tau_0}^{\tau_1}\int_{\sig{\tau}{r^*}{\tilde{r}}} J_\mu^{\partial{\tau}}n^\mu_{\sig{\tau}{r^*}{\tilde{r}}}\dV_{\sig{\tau}{r^*}{\tilde{r}}}d\tau\\
&\le& C (\tau_1-\tau_0) \int_{\sig{\tau_0}{r^*}{\tilde{r}}} J_\mu^Nn^\mu_{\sig{\tau_0}{r^*}{\tilde{r}}}\dV_{\sig{\tau_0}{r^*}{\tilde{r}}}\label{bulynotsobulk},
\eea
where $C$ can be chosen arbitrarily large.
Hence, combining \eqref{divergence} and \eqref{bulynotsobulk} in \eqref{divergence1} we obtain 
\bea
&[(1+C(\tau_1-\tau_0))]\int_{\sig{\tau_0}{r^*}{\tilde{r}}}J_\mu^Nn^\mu_{\sig{\tau_0}{r^*}{\tilde{r}}} \dV_{\sig{\tau_0}{r^*}{\tilde{r}}}\\ \nonumber
& \ge  (a-\epsilon) \int_{{\cal{R_I}}} J_\mu^N n^\mu_{\sig{\tau}{r^*}{\tilde{r}}}\dV_{{\cal{R_I}}}\\ \nonumber
&+\int_{{\sig{\tau_1}{r^*}{\tilde{r}}}} J^N_\mu n^\mu_{\sig{\tau_1}{r^*}{\tilde{r}}}\dV_{\sig{\tau_1}{r^*}{\tilde{r}}}.
\eea
Adding a term of the form $(a-\epsilon)\int_{\cR_{I_B}} J_\mu^N n^\mu_{\sig{\tau}{r^*}{\tilde{r}}}\dV_{\cR_{I_B}}$ to both sides and using \eqref{bulynotsobulk} we obtain
\bea\label{energyalter}
&[(1+C(\tau_1-\tau_0))]\int_{\sig{\tau_0}{r^*}{\tilde{r}}}J_\mu^Nn^\mu_{\sig{\tau_0}{r^*}{\tilde{r}}} \dV_{\sig{\tau_0}{r^*}{\tilde{r}}}\\ \nonumber
& \ge  (a-\epsilon) \int_{\tau_0}^{\tau_1}\int_{\sig{\tau}{r^*}{\tilde{r}}} J_\mu^N n^\mu_{\sig{\tau}{r^*}{\tilde{r}}}\dV_{\sig{\tau}{r^*}{\tilde{r}}}d\tau\\ \nonumber
&+\int_{{\sig{\tau_1}{r^*}{\tilde{r}}}} J^N_\mu n^\mu_{\sig{\tau_1}{r^*}{\tilde{r}}}\dV_{\sig{\tau_1}{r^*}{\tilde{r}}}.
\eea
Let  $f(\tau)=\int_{\sig{\tau}{r^*}{\tilde{r}}}J_\mu^Nn^\mu_{\sig{\tau}{r^*}{\tilde{r}}} \dV_{\sig{\tau}{r^*}{\tilde{r}}}$, then Eq. \eqref{energyalter} is given by 
\bea
[(1+C(\tau_1-\tau_0))]f(\tau_0)\ge  (a-\epsilon) \int_{\tau_0}^{\tau_1}f(\tau)+f(\tau_1).
\eea
Rearranging terms, diving by $\tau_1-\tau_0$ we have 
\bea
Cf(\tau_0)\ge  \frac{(a-\epsilon)}{\tau_1-\tau_0} \int_{\tau_0}^{\tau_1}f(\tau)+\frac{f(\tau_1)-f(\tau_0)}{\tau_1-\tau_0}.
\eea
Taking the limit $\tau_1\rightarrow \tau_0$ gives
\bea
Cf(\tau_0)\ge  {(a-\epsilon)} f(\tau_1)+\frac{d}{d\tau}f(\tau)|_{\tau_1}.
\eea
This implies
\bea
\frac{d}{d\tau}\left( f(\tau)|_{\tau_1}e^{ {(a-\epsilon)}\tau_1}-\frac{C}{ {(a-\epsilon)} }f(\tau_0)e^{ {(a-\epsilon)} \tau_1}\right)\le 0,
\eea
which gives for some constant $B$
\bea
B \int_{{\sig{\tau_0}{r^*}{\tilde{r}}}} J^N_\mu n^\mu_{\sig{\tau_0}{r^*}{\tilde{r}}}\dV_{\sig{\tau_0}{r^*}{\tilde{r}}}  \ge   \int_{{\sig{\tau_1}{r^*}{\tilde{r}}}} J^N_\mu n^\mu_{\sig{\tau_1}{r^*}{\tilde{r}}}\dV_{\sig{\tau_1}{r^*}{\tilde{r}}}.
\eea
\end{proof}

\subsection{The Intermediate Region}

{\begin{figure}[!ht]
\centering
\includegraphics[width=0.4 \textwidth]{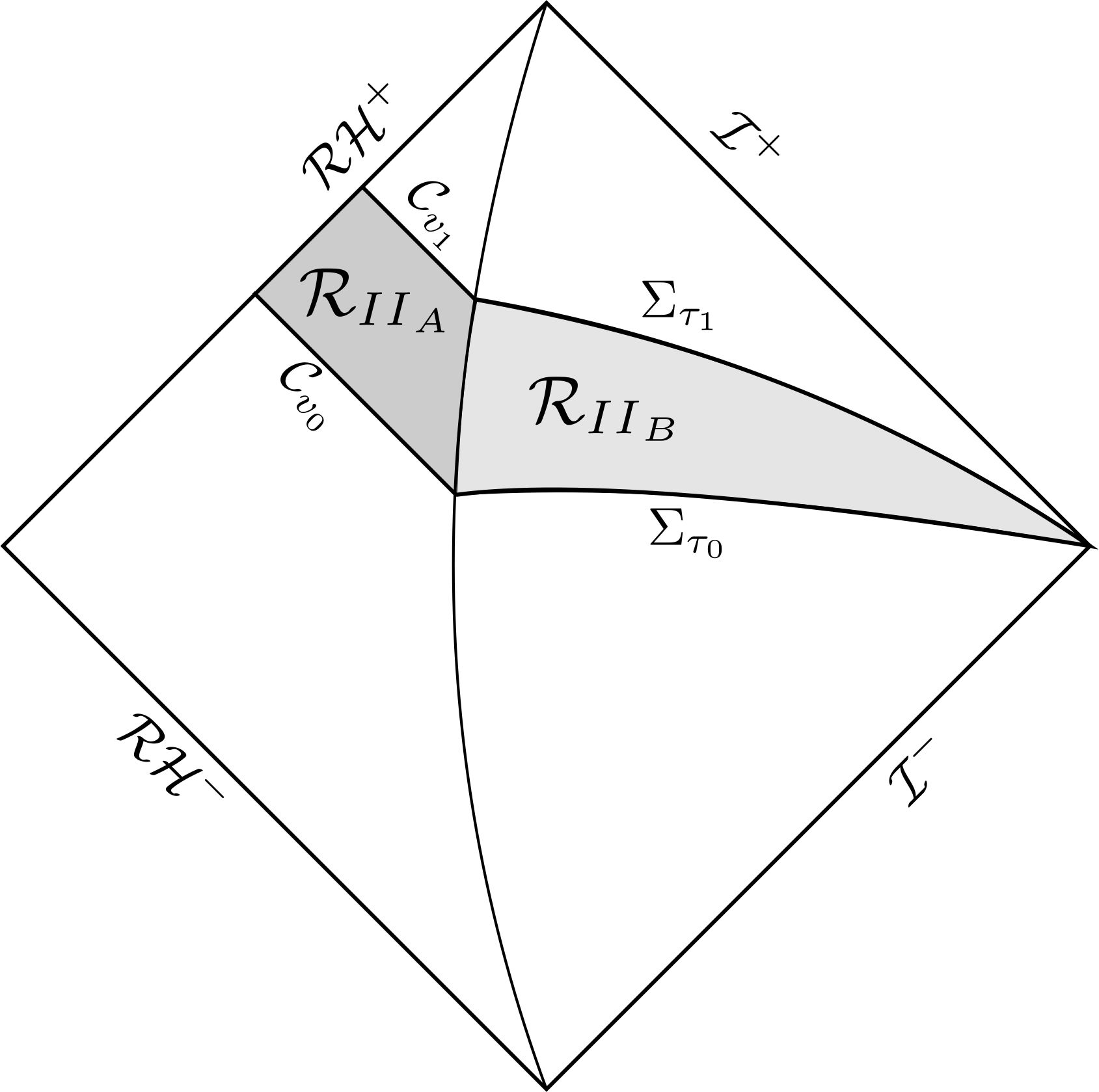}
\caption[]{Sketch of region $R_{II}=R_{{II}_A} \cup R_{{II}_B}$ depicted in darker shades}
\label{intermediate_figure}\end{figure}}
\begin{prop}\lb{decaymedium}
Let $X_f$ be the vector field multiplier \eqref{Xf} given in Section \ref{X-section} and $\psi$ be a solution to \eqref{wave_psi}, the homogeneous wave equation on Rindler with compactly supported data. Then the following statements hold
\bea
\lb{zwei}
C(\tilde{r}) K^{X_f}(\psi) &>& J_{\mu}^{\partial_\tau} n^{\mu}_{\Sigma_{\tau_*}}(\psi), \quad \mbox{for} \quad r\ge \tilde{r}
\eea
and
\bea
\lb{drei}
|J_{\mu}^{X_f} n^{\mu}_{\Sigma_{\tau_*}}(\psi)|&\le& C J_{\mu}^{\partial_\tau} n^{\mu}_{\Sigma_{\tau_*}}(\psi).
\eea
Moreover, we have 
\bea
\lb{funf}
\int\limits_{\cR_{II_B}} J_{\mu}^{\partial_\tau} n^{\mu}_{\Sigma_{\tau_*}}(\psi)\dV_{\cR_{II_B}}&\le& C \int\limits_{\Sigma_{\tau_0}} J_{\mu}^{\partial_\tau} n^{\mu}_{\Sigma_{\tau_*}}(\psi)\dV_{\Sigma_{\tau_0}}, 
\eea
where $\cR_{II}:=\left\{D^+(\Sigma_{\tau_0}^{\tilde{r}})\cap D^{-}(\Sigma_{\tau_1}^{\tilde{r}})\right\}$ and
${\cR_{II_B}}:\{p\in\cR_{II} \in \text{s. t. } r\ge \tilde{r}\} $ for $\tilde{r}\in \mathbb{R}$, see Figure \ref{intermediate_figure}.
 \end{prop}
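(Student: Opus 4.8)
The plan is to separate the three assertions and treat them in increasing order of difficulty. The pointwise inequalities \eqref{zwei} and \eqref{drei} are algebraic and drop out of the explicit formulae \eqref{KXf}, \eqref{JXf}, \eqref{JXf2} and \eqref{Jtau-tau}; the spacetime bound \eqref{funf} is the substantive statement and will be obtained by applying the divergence theorem to $J^{X_f}$ over $\cR_{II_B}$, in the same manner as the proof of Proposition \ref{energy-conservation}.

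For \eqref{drei}: by \eqref{JXf}, $J^{X_f}_\mu n^\mu_{\Sigma_\tau} = -\big(\tanh(ar)+b\big)e^{-ar}\,\partial_\tau\psi\,\partial_r\psi$, so the inequality $2|\partial_\tau\psi\,\partial_r\psi|\le(\partial_\tau\psi)^2+(\partial_r\psi)^2$ together with \eqref{Jtau-tau} gives $|J^{X_f}_\mu n^\mu_{\Sigma_\tau}|\le|\tanh(ar)+b|\,J^{\partial_\tau}_\mu n^\mu_{\Sigma_\tau}\le(1+|b|)\,J^{\partial_\tau}_\mu n^\mu_{\Sigma_\tau}$, i.e. \eqref{drei} with $C=1+|b|$. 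For \eqref{zwei}, dividing \eqref{KXf} by \eqref{Jtau-tau} produces the pointwise quotient
\[
\frac{K^{X_f}(\psi)}{J^{\partial_\tau}_\mu n^\mu_{\Sigma_\tau}(\psi)}=\frac{a\,e^{-ar}}{\cosh^2(ar)},
\]
a strictly positive continuous function of $r$ alone; on the range of $r$ actually reached by $\cR_{II_B}$ --- bounded, since the data is compactly supported and signals propagate at finite speed --- this quotient is bounded below, and its reciprocal furnishes $C(\tilde r)$. It is convenient at this point to fix the free constant $b$ so that $f(r)=\tanh(ar)+b>0$ for $r\ge\tilde r$ (e.g. $b\ge1$), which makes the boundary flux \eqref{JXf2} of $J^{X_f}$ through constant-$r$ surfaces sign-definite.

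For \eqref{funf} I would apply the divergence theorem \eqref{divthe} to the current $J^{X_f}$ over $\cR_{II_B}$. Since $\psi$ solves $\Box_g\psi=0$, the term $\cE^{X_f}$ in \eqref{diver} vanishes, so $\int_{\cR_{II_B}}K^{X_f}\,\dV$ equals the signed sum of the fluxes of $J^{X_f}$ through the boundary components of $\cR_{II_B}$, which (see Figure \ref{intermediate_figure}) are a cap in $\Sigma_{\tau_0}$, a cap in $\Sigma_{\tau_1}$, the null segment issuing from the corner at $r=\tilde r$, and the component at $r\to\infty$. The last vanishes by compact support of $\psi$. On the $\Sigma_{\tau_i}$-caps, \eqref{drei} gives $|J^{X_f}_\mu n^\mu|\le C\,J^{\partial_\tau}_\mu n^\mu$; on the null segment, a computation analogous to those behind \eqref{JXf}, \eqref{JXf2} shows $|J^{X_f}_\mu n^\mu_{\cC_v}|$ is at most $(1+|b|)$ times the non-negative quantity $J^{\partial_\tau}_\mu n^\mu_{\cC_v}$ of \eqref{Jtau-v}. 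Hence $\int_{\cR_{II_B}}K^{X_f}\,\dV$ is bounded by a constant times a sum of $\partial_\tau$-energy fluxes through these surfaces, and each such flux is at most $B\int_{\Sigma_{\tau_0}}J^{\partial_\tau}_\mu n^\mu_{\Sigma_{\tau_0}}\,\dV_{\Sigma_{\tau_0}}$ by Proposition \ref{energy-conservation1}. On the other hand, integrating \eqref{zwei} over $\cR_{II_B}$ yields $\int_{\cR_{II_B}}J^{\partial_\tau}_\mu n^\mu_{\Sigma_{\tau_*}}\,\dV_{\cR_{II_B}}\le C(\tilde r)\int_{\cR_{II_B}}K^{X_f}\,\dV$. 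Combining the two chains of inequalities gives \eqref{funf}.

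The step I expect to demand the most care is the interplay of \eqref{zwei} with the shape of $\cR_{II_B}$. The coercivity quotient $a e^{-ar}/\cosh^2(ar)$ tends to $0$ as $r\to+\infty$, so \eqref{funf} cannot be deduced from \eqref{zwei} for arbitrary finite-energy data; one genuinely uses the compact support of the data and finite propagation speed to confine $\psi$ to a bounded $r$-range inside $\cR_{II_B}$, at the cost of a constant in \eqref{funf} depending on $\tau_1$ --- harmless here, just as with the $\tau$-dependent constants in Proposition \ref{energy-conservation}. The other point to monitor is that, unlike the Killing current $J^{\partial_\tau}$ (whose fluxes through $\cRH^+$ and $\cI^+$ were simply non-negative), $J^{X_f}$ has no preferred sign on the null part of $\partial\cR_{II_B}$ (and similarly for any portion of $\cS_{\tilde r}$ that the precise geometry of $\cR_{II_B}$ may contribute), so each such piece must be dominated pointwise by $J^{\partial_\tau}$ --- via the null analogue of \eqref{drei}, or via \eqref{JXf2} --- before the $\partial_\tau$-flux conservation of Proposition \ref{energy-conservation1} can be invoked. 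None of the individual steps is deep; the content is the accurate accounting over the boundary of the region depicted in Figure \ref{intermediate_figure}.
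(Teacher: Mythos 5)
Your overall architecture (pointwise comparisons \eqref{zwei}--\eqref{drei}, then the divergence theorem for $J^{X_f}$ on $\cR_{II_B}$ together with Proposition \ref{energy-conservation1}) is the same as the paper's, and your proof of \eqref{drei} is fine. But there are two genuine gaps. First, the coercivity step: you correctly compute the ratio $K^{X_f}/\big(J^{\partial_\tau}_\mu n^\mu_{\Sigma_\tau}\big)=a e^{-ar}/\cosh^2(ar)$ and correctly note it tends to $0$ as $r\to\infty$ (the paper instead asserts a uniform lower bound $c(\tilde r)$ on $[\tilde r,\infty)$, claiming $\cosh^2(ar)$ is bounded). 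Your patch --- restricting to the support of $\psi$ via finite propagation speed --- does not prove \eqref{zwei} as stated, and it makes the constant depend on the largest $r$ in the support inside $\cR_{II_B}$, which grows linearly in $\tau_1-\tau_0$; since the ratio decays like $e^{-3ar}$, your constant in \eqref{funf} grows exponentially in $\tau_1-\tau_0$. Calling this ``harmless'' is wrong: \eqref{funf} is used as an integrated-local-energy input in Proposition \ref{decaynear} and in Theorem \ref{main} on dyadic intervals $[\tau_{n-1},\tau_n]$ with $\tau_n\to\infty$, where uniformity of $C$ in $\tau_1$ is essential, and the analogy with Proposition \ref{energy-conservation} does not apply because there the $(\tau_1-\tau_0)$-dependence is subsequently removed by the Gr\"onwall step, whereas your argument has no mechanism to remove it.

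Second, the boundary accounting at $r=\tilde r$: the paper's pivotal device is to choose $b$ in \eqref{function} so that $\tanh(a\tilde r)+b=0$, i.e.\ $f(\tilde r)=0$, which makes the flux \eqref{JXf2} through the timelike boundary piece at $\tilde r$ vanish identically; the caps at $\Sigma_{\tau_0},\Sigma_{\tau_1}$ are then handled by \eqref{drei} and Proposition \ref{energy-conservation1}, exactly as you do. You instead fix $b\ge 1$ so that $f>0$, and propose to dominate the inner-boundary flux by $J^{\partial_\tau}$ and then invoke Proposition \ref{energy-conservation1}. If the inner boundary of $\cR_{II_B}$ is the timelike cylinder $\cS_{\tilde r}$ (this is how the paper treats the region --- ``the timelike boundary term at $\tilde r$''), then with $f(\tilde r)>0$ that term enters the energy identity with the unfavourable sign, and what you must bound is $\int_{\tau_0}^{\tau_1}\big((\partial_\tau\psi)^2+(\partial_r\psi)^2\big)(\tilde r,\tau)\,\md\tau$: this is a flux through a timelike hypersurface, which Proposition \ref{energy-conservation1} does not control (it bounds fluxes through the spacelike/null foliation leaves only); controlling precisely such terms is what forces the mean-value-theorem choice of $r_\star$ in Proposition \ref{decaynear}. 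Your argument for the null portion of the boundary is plausible, but without the paper's choice $f(\tilde r)=0$ (or at least $f(\tilde r)\le 0$), or an additional argument for the timelike flux, the proof of \eqref{funf} does not close.
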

\begin{proof}
We have 
\bea\label{bla}
K^{X_f}=\quad \frac{a}{2\cosh^2(ar)}e^{-2ar}\left[(\partial_{\tau} \psi)^2+(\partial_{r} \psi)^2\right]
\eea
and
\bea\label{ble}
J_{\mu}^{\partial_{\tau}} n^{\mu}_{\Sigma_{\tau}}= \frac{e^{-ar}}{2}\left[(\partial_{\tau} \psi)^2+(\partial_{r} \psi)^2\right],
\eea
Statement \eqref{zwei} can be shown by comparing the coefficients in \eqref{bla} and \eqref{ble}. Since $\cosh^2(ar)$ is a bounded function, then for $r\in [\tilde{r}, \infty)$ there exists, ${c(\tilde{r})}$,  such that 
\bea
\frac{a}{2\cosh^2(ar)}e^{-ar}>c(\tilde{r})>0.
\eea
Now using the normal vector \eqref{null_normal_tau} and the multiplier 
\eqref{Xf} we get
\bea
\lb{JXf3}
J_{\mu}^{X_f} n^{\mu}_{\Sigma_{\tau}}= -\left(\tanh(ar)+b\right)e^{-ar}(\partial_{\tau} \psi\partial_{r} \psi).
\eea
The proof of statement \eqref{drei} follows from applying Cauchy--Schwarz inequality to \eqref{JXf3} and then comparing with \eqref{ble} to get
\bea
|J_{\mu}^{X_f} n^{\mu}_{\Sigma_{\tau_*}}|&\le&  -\left[\tanh(ar)+b\right]\frac{e^{-ar}}{2}\left[(\partial_{\tau} \psi)^2+(\partial_{r} \psi)^2\right]\nonumber\\
&\le& C e^{-ar}\left[(\partial_{\tau} \psi)^2+(\partial_{r} \psi)^2\right]\nonumber\\
&\le& J_{\mu}^{\partial_\tau} n^{\mu}_{\Sigma_{\tau_*}}.
\eea
In order to show statement \eqref{funf}, the following two steps are applied.
First, we choose $b$ in \eqref{function} such that $\tanh(a\tilde{r})+b=0$. Second, we apply the divergence theorem in $\cR_{II_B}$ using the vector field $X_f$, which give the following inequality
\bea
\lb{compa}
\int\limits_{\cR_{II_B}} K^{X_f}(\psi)\dV_{\cR_{II_B}}\le \left|\int\limits_{\Sigma_{\tau_1}} J_{\mu}^{X_f}(\psi) n^{\mu}_{\Sigma_{\tau_1}}\dV_{\Sigma_{\tau_1}}\right|+\left|\int\limits_{\Sigma_{\tau_0}} J_{\mu}^{X_f}(\psi) n^{\mu}_{\Sigma_{\tau_0}}\dV_{\Sigma_{\tau_0}}\right|.
\eea
The timelike boundary term at $\tilde{r}$ vanishes due to our choice of $b$  and at infinity due to the compact support of $\psi$.
By Eq.\eqref{zwei}, Eq.\eqref{drei} and Proposition \eqref{energy-conservation1}, we have 
\bea
\lb{KXesti}
&\int\limits_{\cR_{II_B}} J_{\mu}^{\partial_\tau}(\psi) n^{\mu}_{\Sigma_{\tau_*}}\dV_{\cR_{II_B}}\le C\int\limits_{\cR_{II_B}} K^{X_f}(\psi)\dV_{\cR_{II_B}}\\
&\le  \left|\int\limits_{\Sigma_{\tau_1}\cap\{r>\tilde{r}\}} J_{\mu}^{X_f}(\psi) n^{\mu}_{\Sigma_{\tau_1}}\dV_{\Sigma_{\tau_1}}\right|+\left|\int\limits_{\Sigma_{\tau_0}\cap\{r>\tilde{r}\}} J_{\mu}^{X_f}(\psi) n^{\mu}_{\Sigma_{\tau_0}}\dV_{\Sigma_{\tau_0}}\right|\\
&\le  \left|\int\limits_{\Sigma_{\tau_1}\cap\{r>\tilde{r}\}} J_{\mu}^{\partial_\tau}(\psi) n^{\mu}_{\Sigma_{\tau_1}}\dV_{\Sigma_{\tau_1}}\right|+\left|\int\limits_{\Sigma_{\tau_0}\cap\{r>\tilde{r}\}} J_{\mu}^{\partial_\tau}(\psi) n^{\mu}_{\Sigma_{\tau_0}}\dV_{\Sigma_{\tau_0}}\right|\\
& \le \int\limits_{\Sigma_{\tau_0}} J_{\mu}^{\partial_\tau}(\psi) n^{\mu}_{\Sigma_{\tau_0}}\dV_{\Sigma_{\tau_0}}.
\eea
\end{proof}


\subsection{Local Energy Decay near $\cRH^+$}
{\begin{figure}[!ht]
\centering
\includegraphics[width=0.45 \textwidth]{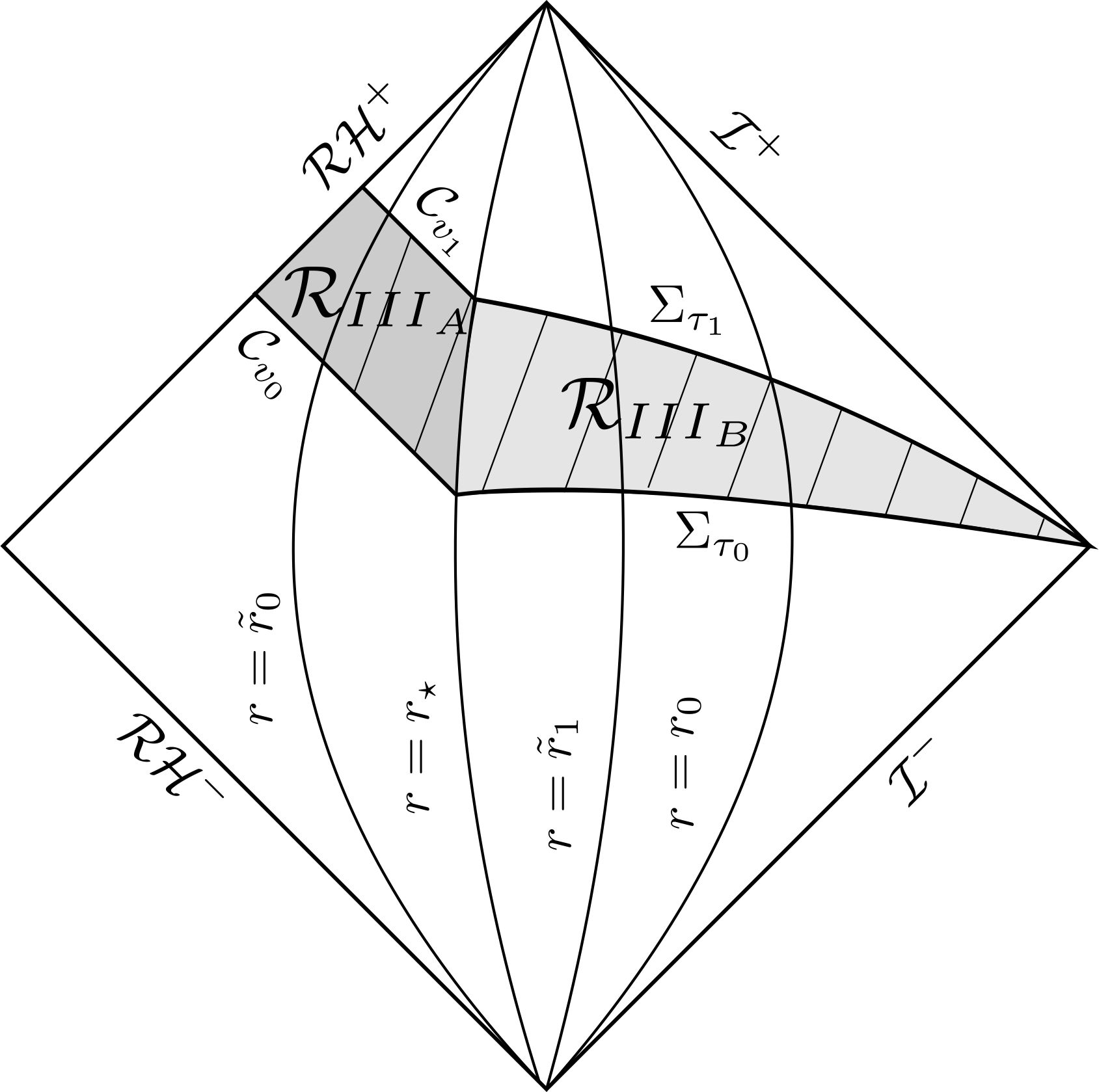}
\caption[]{Sketch of region $R_{III}=R_{{III}_A} \cup R_{{III}_B}$ depicted in darker shades}
\label{localenergy}\end{figure}}
\begin{prop}\lb{decaynear}
Let $N$ be the vector field multiplier given by Proposition \eqref{mi} and $\psi$ be a solution to \eqref{wave_psi} with compactly supported data. Then the following statement hold
\bea
\lb{near}
\int\limits_{\cR_{{III}_A}} J_{\mu}^{N} n^{\mu}_{\cC_{v}}(\psi)\dV_{\cR_{{III}_A}}&\le& C \int\limits_{\Sigma_{\tau_0}} J_{\mu}^{N} n^{\mu}_{\Sigma_{\tau_*}}(\psi)\dV_{\Sigma_{\tau_0}}, 
\eea
where $\cR_{III}:=\left\{D^+(\Sigma_{\tau_0}^{{r}_{\star}})\cap D^{-}(\Sigma_{\tau_1}^{{r}_{\star}})\right\}=R_{{III}_A} \cup R_{{III}_B}$ with
${\cR_{{III}_A}}:\{p\in \cR_{III} \in \text{s. t. } r\le {r}_{\star}\} $ and ${\cR_{{III}_B}}:\{p\in \cR_{III} \in \text{s. t. } {r}_{\star} \le r \} $ for $r_{\star}$ defined as below. See Figure \ref{localenergy}.
 \end{prop}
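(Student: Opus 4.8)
The plan is to adapt the Dafermos--Rodnianski redshift argument of \cite{lectures} to $\cRH^+$: integrate the redshift bulk term $K^N$ over the slab $\cR_{III}$, use the coercivity \eqref{energy_control} to turn the near-horizon part of this integral into the quantity appearing on the left of \eqref{near}, and absorb the error that $K^N$ produces away from the horizon by the initial energy. Concretely, I would first fix $r_\star=r_0$, where $r_0$ and $R$ are the constants supplied by Proposition \ref{mi}, and choose this same $r_0$ to play the role of an admissible value $\tilde r$ in Proposition \ref{decaymedium}; with this choice $\cR_{III}$ is the region $\cR_{II}$ of that proposition, $\cR_{{III}_B}=\cR_{II_B}=\{p\in\cR_{III}:r\ge r_0\}$, and $\cR_{{III}_A}=\{p\in\cR_{III}:r\le r_0\}$ lies in the redshift region of \eqref{energy_control}. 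Then I would apply the divergence theorem \eqref{divthe} to $J^N$ on $\cR_{III}$, exactly as in the proof of Proposition \ref{energy-conservation} (with an exhaustion by compact subregions; the contributions at infinity vanish by compact support of the data). The boundary of $\cR_{III}$ then consists of $\Sigma_{\tau_0}^{r_\star}$, $\Sigma_{\tau_1}^{r_\star}$ and a segment of $\cRH^+$; since $N$ is future-directed timelike and all normals are future-directed, the dominant energy condition makes $J^N_\mu n^\mu\ge0$ on $\Sigma_{\tau_1}^{r_\star}$ and on $\cRH^+$, so after dropping these non-negative terms one is left with $\int_{\cR_{III}}K^N\,\dV\le\int_{\Sigma_{\tau_0}^{r_\star}}J^N_\mu n^\mu\,\dV$.

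Next I would split $\int_{\cR_{III}}K^N=\int_{\cR_{{III}_A}}K^N+\int_{\cR_{{III}_B}}K^N$. On $\cR_{{III}_A}$ the estimate \eqref{energy_control} gives $K^N\ge(a-\epsilon)J^N_\mu n^\mu_{\cC_v}\ge0$; on $\cR_{{III}_B}$ one has $K^N=0$ for $r\ge R$ (there $N=\partial_\tau$ is Killing, so $K^N=K^{\partial_\tau}=0$) and $|K^N|\le\tilde C\,J^{\partial_\tau}_\mu n^\mu_{\Sigma_{\tau_*}}$ for $r_0\le r\le R$ by \eqref{bulkybulk}. Combining these with \eqref{funf} of Proposition \ref{decaymedium} (applicable since $\cR_{{III}_B}=\cR_{II_B}$) to bound $\int_{\cR_{{III}_B}}|K^N|$ by $\int_{\Sigma_{\tau_0}}J^{\partial_\tau}_\mu n^\mu$, I obtain
\[
(a-\epsilon)\int_{\cR_{{III}_A}} J^N_\mu n^\mu_{\cC_v}\,\dV \;\le\; \int_{\Sigma_{\tau_0}^{r_\star}} J^N_\mu n^\mu\,\dV \;+\; C\int_{\Sigma_{\tau_0}} J^{\partial_\tau}_\mu n^\mu_{\Sigma_{\tau_0}}\,\dV .
\]

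It then remains to bound both terms on the right by a multiple of $\int_{\Sigma_{\tau_0}}J^N_\mu n^\mu_{\Sigma_{\tau_0}}\,\dV$. For the first term: $\Sigma_{\tau_0}^{r_\star}$ equals $\Sigma_{\tau_0}$ for $r\ge r_\star$, and its null piece lies to the future of $\Sigma_{\tau_0}|_{r<r_\star}$, so one further application of the divergence theorem on the small near-horizon triangle bounded by $\Sigma_{\tau_0}|_{r<r_\star}$, that null piece and a segment of $\cRH^+$ --- where $K^N\ge0$ again by \eqref{energy_control} --- gives $\int_{\Sigma_{\tau_0}^{r_\star}}J^N_\mu n^\mu\le\int_{\Sigma_{\tau_0}}J^N_\mu n^\mu_{\Sigma_{\tau_0}}$. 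For the second term: the pointwise bound $J^{\partial_\tau}_\mu n^\mu_{\Sigma_{\tau_0}}\le C\,J^N_\mu n^\mu_{\Sigma_{\tau_0}}$ holds because $N$ stays uniformly timelike while $\partial_\tau$ degenerates at $\cRH^+$ (this is built into the construction of $N$; cf.\ the second line of \eqref{energy_control} together with $N=\partial_\tau$ for $r\ge R$). Dividing through by $a-\epsilon$ then gives \eqref{near}, and this also fixes $r_\star:=r_0$.

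The hard part is the error term $\int_{\cR_{{III}_B}}|K^N|$. The deformation tensor of $N$ is supported on the compact-$r$ slab $\{r_0\le r\le R\}$, and the naive estimate via the co-area formula together with energy conservation (Proposition \ref{energy-conservation1}) would only give something of size $(\tau_1-\tau_0)\int_{\Sigma_{\tau_0}}J^{\partial_\tau}_\mu n^\mu$, which is useless once one wants to let $\tau_1\to\infty$. Having the $\tau$-uniform intermediate-region estimate \eqref{funf} available beforehand is exactly what is needed to absorb this term uniformly; everything else is divergence-theorem bookkeeping.
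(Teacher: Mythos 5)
Your proof is correct, and it reaches \eqref{near} by a genuinely different mechanism than the paper's. The paper applies the divergence theorem only in the near-horizon region $\cR_{{III}_A}$, so a timelike boundary flux through the constant-$r$ slice $\{r=r_{\star}\}$ appears; since that flux cannot be bounded slice-by-slice, the paper chooses $r_{\star}$ by the mean value theorem (averaging $g(\bar r)=\int_{\tau_0}^{\tau_1}J^N_\mu n^\mu_{\bar r}e^{a\bar r}\,d\tau$ over an $r$-slab below $r_0$), which converts the boundary term into a spacetime integral over that slab, and then controls it with \eqref{funf}. You instead apply the divergence theorem on the whole slab $\cR_{III}$, so no timelike boundary occurs at all, and the price is the interior error $\int_{\cR_{{III}_B}}|K^N|$, which you absorb via \eqref{bulkybulk} and the same $\tau$-uniform estimate \eqref{funf}; you also fix $r_{\star}:=r_0$ rather than obtaining it from the averaging argument, which is admissible given that the proposition leaves the choice of $r_{\star}$ to the proof (though note the paper takes $r_{\star}<r_0$, and downstream statements quietly assume the paper's choice). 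What each route buys: the paper's localized argument additionally produces a ``good slice'' $r_{\star}$ with controlled flux through $\{r=r_{\star}\}$, a standard Dafermos--Rodnianski pigeonhole device that can be reused later, while your global-slab argument is simpler bookkeeping and mirrors the proof of Proposition \ref{energy-conservation}, with \eqref{funf} replacing the crude bound that grows like $\tau_1-\tau_0$ --- you correctly identify that this uniformity is the crux. Two small points to tighten: the pointwise comparison $J^{\partial_\tau}_\mu n^\mu_{\Sigma}\le C\,J^{N}_\mu n^\mu_{\Sigma}$ on all of $\Sigma_{\tau_0}$ (including $r\le r_0$, where \eqref{energy_control} is silent) should be justified by the fact that $N-\partial_\tau=Y$ is future-directed causal, so $J^{Y}_\mu n^\mu\ge 0$; and your bound $\int_{\Sigma_{\tau_0}^{r_\star}}J^N_\mu n^\mu\le\int_{\Sigma_{\tau_0}}J^N_\mu n^\mu$ is essentially \eqref{bound3}--\eqref{bound4} of Proposition \ref{energy-conservation} and could simply be cited instead of rerun.
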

\begin{proof}
Define  $g(\bar{r})=\int_{\tau_0}^{\tau_1}J^N_{\mu}n^\mu_{\bar{r}}e^{a\bar{r}}\md t $ and choose $\tilde{r}_0,\tilde{r}_1< r_0$ , then from the mean value theorem for integrals  there is a $r_{\star}\in (\tilde{r}_0, \tilde{r}_1)$  such that  $g(r_{\star})=\frac{1}{\tilde{r}_1 -\tilde{r}_0}\int_{\tilde{r}_0}^{\tilde{r}_1}g(\bar{r})\md \bar{r}$.  Take this $r_{\star}$ to define 
$\cR_{III}$.
We use the divergence theorem  in the region $\cR_{{III}_A}$ with the vector field $N$ to obtain
\bea\label{horizon1}
&&\int_{\cC_{v_1}(u_1(r^*),\infty)}J^N_{\mu}n^{\mu}_{\cC_{v}}\dV_{\cC_{v}}+\int_{\cR_{{III}_A}}K^N\dV_{\cR_{{III}_A}}
\nonumber \\ 
&&
\le \int_{\cC_{v_0}(u_0(r^*),\infty)}J^N_{\mu}n^{\mu}_{\cC_{v}}\dV_{\cC_{v}}+\left| \int_{r_{\star}}J^N_{\mu}n^{\mu}_{\cC_{r}}\dV_{\cC_{r}}\right|
\eea
Dropping the positive boundary term on the left hand side and using  \eqref{energy_control} we obtain 
\bea\label{horizon2}
(a-\epsilon)\int_{\cR_{{III}_A}}J^N_\mu n^{\mu}_{\cC_{v}}\dV_{\cR_{{III}_A}}\le \int_{\cC_{v_0}(u_0(r^*),\infty)}J^N_{\mu}n^{\mu}_{\cC_{v}}\dV_{\cC_{v}}+\left| \int_{r_\star}J^N_{\mu}n^{\mu}_{\cC_{r}}\dV_{\cC_{r}}\right|
\eea
To estimate the integral given by a timelike boundary we recall {that by the mean value theorem}  $g(r_{\star})=\frac{1}{\tilde{r}_1 -\tilde{r}_0}\int_{\tilde{r}_0}^{\tilde{r}_1}g(\bar{r})\md \bar{r}$  and we obtain the following inequalities
\begin{align*}
g(r_{\star})&\le |g(r_{\star})|\\
&\le \left|\frac{1}{\tilde{r}_1 -\tilde{r}_0}\int_{\tilde{r}_0}^{\tilde{r}_1}g(\bar{r})\md \bar{r}\right|\\
&\le C \int_{\tilde{r}_0}^{\tilde{r}_1}\int_{\tau_0}^{\tau_1} \left((\partial_r\psi)^2+(\partial_\tau\psi)^2\right) \md\tau \md r\\
&{\le} C \int_{\tilde{r}_0}^{\tilde{r}_1}\int_{\tau_0}^{\tau_1} J_{\mu}^{N}(\psi) n^{\mu}_{\Sigma_{\tau_*}}e^{ar} \md\tau\md r\\
&\le  C\int_{\cR_{{III}_B}}  J_{\mu}^{N}(\psi) n^{\mu}_{\Sigma_{\tau_*}} \dV_{\cR_{{III}_B}}\\
&\stackrel{\eqref{funf}}\le C \int_{ \Sigma_{\tau_0} } J_{\mu}^{N}(\psi) n^{\mu}_{\Sigma_{\tau}}\dV_{ \Sigma_{\tau_0} }
\end{align*}
Putting this together in Eq. \eqref{horizon2}, we get 
\bea
&&(a-\epsilon)\int_{\cR_{{III}_A}}J^N_\mu n^\mu_{\cC_{v}}\dV_{\cR_{{III}_A}}\nonumber \\
&&
\le C\left( \int_{\cC_{v_0}(u_0(r^*),\infty)}J^N_{\mu}n^\mu_{\cC_{v}}\dV_{\cC_{v}}+ \int_{ \Sigma_{\tau_0} } J_{\mu}^{N}(\psi) n^{\mu}_{\Sigma_{\tau}}\dV_{ \Sigma_{\tau_0} }\right)
\eea
which combined with Proposition \eqref{energy-conservation} gives 
\bea
(a-\epsilon)\int_{\cR_{{III}_A}}J^N_\mu n^\mu_{\cC_{v}}\dV_{\cR_{{III}_A}}\le C \int_{ \Sigma_{\tau_0} } J_{\mu}^{N}(\psi) n^{\mu}_{\Sigma_{\tau}}\dV_{ \Sigma_{\tau_0} }.
\eea
\end{proof}


\subsection{Far Away Estimates }\lb{away}
{
In the following we will derive estimates in a far away region in close analogy to considerations in \cite[Section 4]{rp} 
}
\begin{thm}\label{far}
Let $N$ be the vector field multiplier given by Proposition \eqref{mi} and $\psi$ be a solution to \eqref{wave_psi} with compactly supported data. 
Then we have in the region {$\cR_{IV_C}:=\{(u,r)\in\cR_{IV}  \text{ s.t.}  r\ge w^*$, $u_0\le u\le u_1\}$} the following estimate 
\bea
\lb{div_D}
&&\int_{\cC_{u_1}(v_1(w^*),\infty)}r ^p(\partial_v\phi)^2 \md v+ \int\limits_{\cR_{IV_C}} p r^{(p-1)}(\partial_v\phi)^2 \md v\md u \nonumber \\
&&=\int_{\cC_{u_0}(v_0(w^*),\infty)}r^p (\partial_v\phi)^2 \md v+\left|\int\limits_{\tau_0}^{\tau_1} 2{w^*}^p (\partial_v\phi)^2 \md \tau\right|
\eea
for all $p>0$ where $\cR_{IV}:=\left\{D^+(\Sigma_{\tau_0}^{\tilde{r}{w}_{\star}})\cap D^{-}(\Sigma_{\tau_1}^{\tilde{r}{w}_{\star}})\right\}$ for $w_{\star}$ defined as below. See Figure \ref{far_away}.
\end{thm}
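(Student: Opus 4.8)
The plan is to apply the divergence theorem (identity \eqref{divthe}) to the current $J^{V}_\mu$ associated with the vector field multiplier $V = r^p\partial_v$ from \eqref{Vvectorfield}, over the region $\cR_{IV_C}$. This region is bounded by the outgoing null segments $\cC_{u_0}(v_0(w^*),\infty)$ and $\cC_{u_1}(v_1(w^*),\infty)$, by the timelike portion of $\cS_{w^*}$ lying between $\tau_0$ and $\tau_1$, and — in the limit — by $\cI^+$. First I would write out the three ingredients already computed in Section~3.6: the bulk term $K^V = pr^{p-1}e^{-a(v-u)}(\partial_v\psi)^2$ from \eqref{V}, the flux through constant-$u$ surfaces $J^V_\mu n^\mu_{\cC_u} = 2r^p e^{-a(v-u)}(\partial_v\psi)^2$ from \eqref{JV} together with $\dV_{\cC_u} = \tfrac{e^{a(v-u)}}{2}\md v$ from \eqref{null_normal_u}, and the flux through the constant-$r$ surface $J^V_\mu n^\mu_r = 2r^p e^{-ar}(\partial_v\psi)^2$ from \eqref{JV_urp} with $\dV_r = \tfrac{e^{a(v-u)/2}}{2}(\md v - \md u)$ from \eqref{normal_r}; along $\cS_{w^*}$ one has $v - u = 2r = 2w^*$, so $e^{-ar}\cdot e^{a(v-u)/2} = 1$ and the constant-$r$ flux contributes $2(w^*)^p(\partial_v\psi)^2\,\md\tau$ after using $\md v - \md u = 2\,\md\tau$ along that surface.

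The key simplification is that the exponential weights cancel in each term: $e^{-a(v-u)}\cdot\tfrac{e^{a(v-u)}}{2} = \tfrac12$ on the null pieces, and similarly $\tfrac12$ in the bulk after pairing $K^V$ with $\dV = \tfrac{e^{a(v-u)}}{2}\md u\,\md v$, which is precisely why the conformal factor of the Rindler metric disappears and one recovers a formula formally identical to the flat-space $r^p$-hierarchy of \cite{rp}. I would then assemble the divergence identity, keeping the sign of the timelike boundary term at $\cS_{w^*}$ inside an absolute value (its sign depends on orientation and need not be favorable), and note that the contribution at $\cI^+$ — the limit $v\to\infty$ of the flux through $\cC_{u_1}$ restricted beyond $\cS_{w^*}$ — is the quantity $\int_{\cC_{u_1}(v_1(w^*),\infty)} r^p(\partial_v\psi)^2\,\md v$ appearing on the left-hand side; for $p$ in the admissible range and compactly supported data these limits exist. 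Here one should also invoke, or at least flag, that $\cR_{IV_C}$ is non-compact in the $v$-direction, so strictly the divergence theorem is applied to the truncated regions $\{v \le V_n\}$ and the limit $V_n\to\infty$ taken, exactly as in the footnote to Proposition~\ref{energy-conservation1}.

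The main obstacle I anticipate is not the algebra — the weight cancellations make each term elementary — but rather the careful bookkeeping of the boundary pieces of $\cR_{IV_C}$: identifying which part of $\partial\cR_{IV_C}$ is the "ingoing" null segment coming from $\Sigma_{\tau_0}$, which is the "outgoing" one reaching toward $\cI^+$, and verifying that the only timelike portion is the segment of $\cS_{w^*}$ between the two null cones (so that no spacelike $\Sigma_\tau$ term survives, since the defining hypersurfaces $\Sigma_{\tau_j}^{\tilde r, w_\star}$ are purely null for $r \ge w^*$). Once the geometry of the region is pinned down and the orientation conventions from Section~\ref{nota} are applied consistently, the stated identity \eqref{div_D} follows directly. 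A secondary point to handle with care is that the claimed validity "for all $p>0$" requires the finiteness of $\int_{\cC_{u_0}} r^p(\partial_v\psi)^2\,\md v$ on the initial cone, which is an assumption to be propagated (or reduced to the compact-support hypothesis) rather than proved here; I would simply state it as the hypothesis under which the identity is meaningful.
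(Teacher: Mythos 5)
Your proposal is correct and takes essentially the same route as the paper: the divergence theorem for $V=r^p\partial_v$ on $\cR_{IV_C}$, the precomputed quantities \eqref{V}, \eqref{JV}, \eqref{JV_urp} together with the volume elements of Section \ref{nota}, the cancellation of the exponential weights on each piece, and the timelike term at $\cS_{w^*}$ kept in absolute value. One small correction: the $\cI^+$ contribution is not the flux through $\cC_{u_1}(v_1(w^*),\infty)$ but the limit of the flux through the truncating constant-$v$ segments, and this vanishes identically for this multiplier since $J^V_\mu n^\mu_{v=\mathrm{const}}=0$ by \eqref{JX_v} (as $T_{uv}=0$ and $V$ has no $\partial_u$ component) --- which is exactly the paper's remark that the energy flux along $\cI^+$ does not contribute for the given multiplier.
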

{\begin{figure}[!ht]
\centering
\includegraphics[width=0.45 \textwidth]{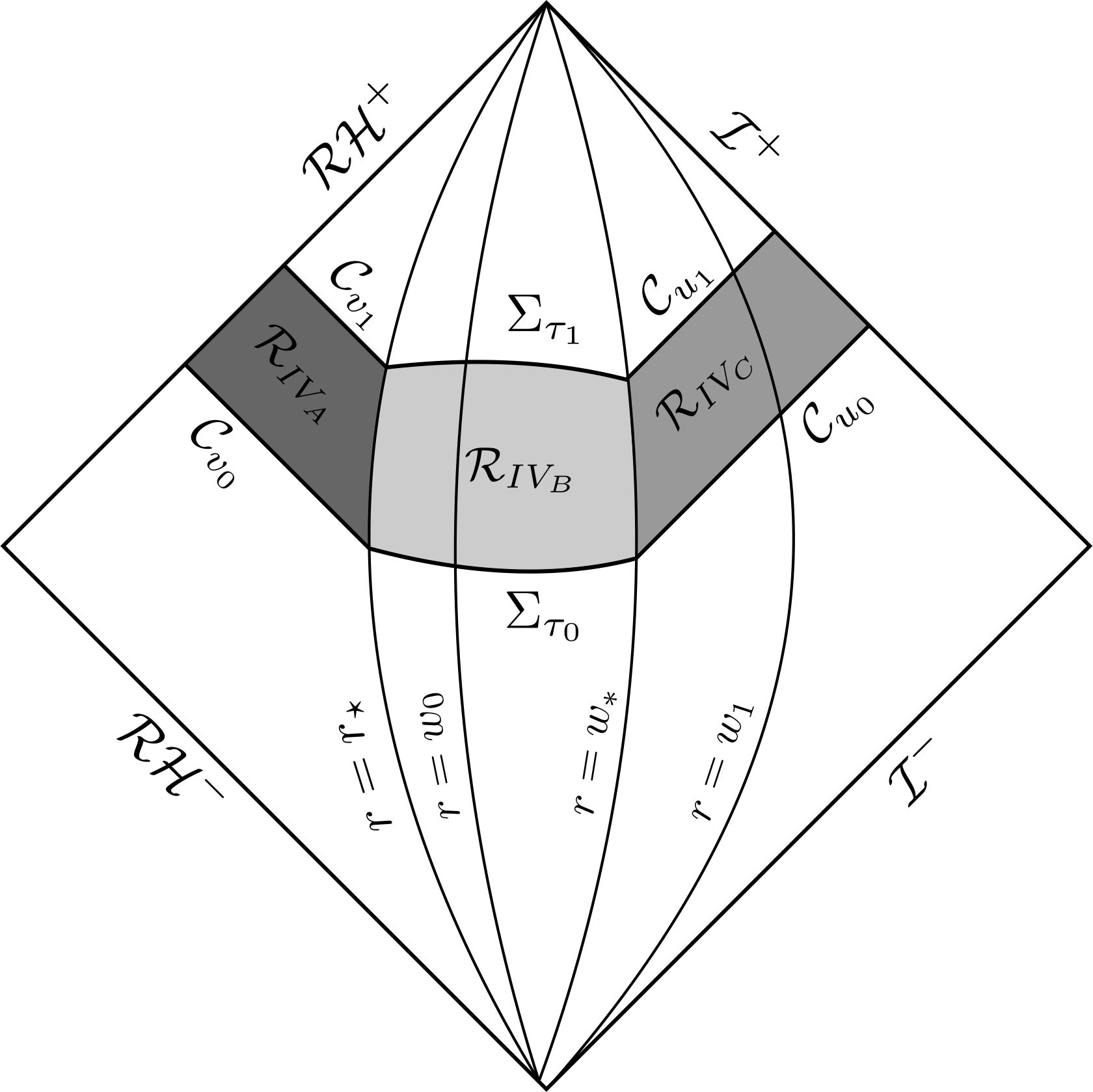}
\caption[]{Region $\cR_{{IV}}$ depicted with the far away region $\cR_{IV_C}$.}
\label{far_away}\end{figure}}
Applying the divergence theorem using the multiplier Eq.\eqref{Vvectorfield} in the entire region $\cR_{{IV}_C}$ and noting that the energy flux along $\cI^+$ does not contribute the given multiplier, leads to
\bea
\lb{entire}
&&\int_{\cC_{u_1}(v_1(w^*),\infty)}J^V_{\mu}n^{\mu}_{u=u_1} \dV_{\cC_{u_1}}+\int\limits_{\cR_{{IV}_C}} K^V \dV_{\cR_{{IV}_C}} \nonumber \\
&&
\leq\int_{\cC_{u_0}(v_0(w^*),\infty)}J^V_{\mu}n^{\mu}_{\cC_{u_0}} \dV_{\cC_{u_0}} + \left| \int\limits^{\tau_1}_{\tau_0}J^V_{\mu}n^{\mu}_{r=R}\dV_{r}\right|.
\eea
Using \eqref{JV}, \eqref{V} and \eqref{JV_urp}, we have
\bea
&&\int_{\cC_{u_1}(v_1(w^*),\infty)}2 r^{p} e^{-a(v-u)} (\partial_v\phi)^2 \dV_{\cC_{u_1}}\nonumber\\
&&
+ \int\limits_{\cR_{{IV}_C}} pr^{(p-1)}e^{-a(v-u)}(\partial_v\phi)^2 \dV_{\cR_{{IV}_C}} \nonumber\\
&\leq&\int_{\cC_{u_0}(v_0(w^*),\infty)}2 r^{p} e^{-a(v-u)} (\partial_v\phi)^2 \dV_{\cC_{u_0}}+\left|\int\limits_{\tau_0}^{\tau_1} 2{R^pe^{-aR}}(\partial_v\phi)^2 \dV_{r}\right|
\eea
which using \eqref{null_normal_u}, \eqref{normal_r} and \eqref{vol_ele} leads to the desired result \eqref{div_D}.

\subsection{Energy decay for solutions with compactly supported initial data}

\begin{thm}\label{main}
Let $\{\Sigma^{\tilde{r},r^*}_\tau\}_\tau$ be a folitiation such that $\tilde{r}<r_0$ and $r^*>0>r_0$ as in Theorem \ref{decaynear} and Theorem \ref{far},  $N$ be the vector field multiplier given by Proposition \eqref{mi} and $\psi$ be a solution to \eqref{wave_psi} with compactly supported data.  Then the following statements hold:
\be
\lb{173}
\int_{\Sigma^{\tilde{r},r^*}_\tau}J_{\mu}^{N}(\psi) n^{\mu}_{\Sigma^{\tilde{r},r^*}_\tau}\dV_{\Sigma^{\tilde{r},r^*}_\tau}\le \frac{C}{\tau}\int_{ \Sigma_0 } J_{\mu}^{N}(\psi) n^{\mu}_{\tau}\dV_{\Sigma_0}
\ee 
\bea
\label{square}
&&\int_{\Sigma^{\tilde{r},r^*}_\tau}J_{\mu}^{N}(\psi) n^{\mu}_{\tau}\dV_{\Sigma^{\tilde{r},r^*}_\tau}\nonumber\\
&&\le  C\frac{C}{\tau^n}\left(\int_{\cC_{u_0}(v_0(\tilde{r}),\infty)}r^n (\partial_v\phi)^2 \md v+ C \int_{ \Sigma_{\tau_0} } J_{\mu}^{N}(\psi) n^{\mu}_{\tau}\dV_{ \Sigma_{\tau_0}}\right),
\eea 
for all $n\in\mathbb{N}$.
\end{thm}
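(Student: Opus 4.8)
The plan is to combine the three region estimates (Propositions~\ref{energy-conservation}, \ref{decaymedium}, \ref{decaynear} and Theorem~\ref{far}) into a single global hierarchy of estimates, and then run the standard two-step pigeonhole/dyadic argument of Dafermos--Rodnianski \cite{rp, lectures}. First I would assemble a \emph{global spacetime integral bound}: adding the bulk terms produced by $N$ near $\cRH^+$ (which controls $\int_{\cR_{III_A}} J^N_\mu n^\mu_{\cC_v}$), by $X_f$ in the intermediate region (which controls $\int_{\cR_{II_B}} J^{\partial_\tau}_\mu n^\mu_{\Sigma_{\tau_*}}$), and by $V=r^p\partial_v$ with $p=1$ in the far region (which controls $\int_{\cR_{IV_C}} r^{0}(\partial_v\phi)^2 = \int (\partial_v\phi)^2$), one obtains, after patching along the interfaces $r=r_\star$ and $r=w^*$ using the mean-value-theorem choices already made, an estimate of the schematic form
\be
\lb{globalmorawetz}
\int_{\tau_0}^{\tau_1}\!\!\int_{\Sigma^{\tilde r,r^*}_\tau} J^N_\mu n^\mu_{\Sigma^{\tilde r,r^*}_\tau}\,\dV_{\Sigma^{\tilde r,r^*}_\tau}\,d\tau \;\le\; C\!\left(\int_{\cC_{u_0}(v_0(\tilde r),\infty)} r\,(\partial_v\phi)^2\,\md v + \int_{\Sigma_{\tau_0}} J^N_\mu n^\mu_{\tau}\,\dV_{\Sigma_{\tau_0}}\right)\!.
\ee
The right-hand side is a conserved-type quantity (the second term is bounded by the initial $J^N$-energy via Proposition~\ref{energy-conservation}, the first by the $p=1$ flux on the initial cone), so this is an integrated local energy decay statement.

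The second step is the pigeonhole argument for \eqref{173}. Since $\tau\mapsto f(\tau):=\int_{\Sigma^{\tilde r,r^*}_\tau} J^N_\mu n^\mu\,\dV$ is essentially monotone decreasing up to the constant $B$ of Proposition~\ref{energy-conservation} (the bulk term $K^N$ has a good sign near the horizon and the error away from it is absorbed, cf.\ the proof of Proposition~\ref{energy-conservation}), the bound $\int_{\tau_0}^{\tau_1} f(\tau)\,d\tau \le C E_0$ from \eqref{globalmorawetz} forces $f$ to decay: on a dyadic interval $[\tau,2\tau]$ there is $\tau'\in[\tau,2\tau]$ with $f(\tau')\le \frac{C}{\tau}E_0$, and then near-monotonicity propagates this to all later times, giving \eqref{173} with $E_0=\int_{\Sigma_0}J^N_\mu n^\mu_\tau\,\dV_{\Sigma_0}$.

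The third step is the bootstrap to arbitrary polynomial rate \eqref{square}, following Section~4 of \cite{rp}. One runs the $r^p$-estimate of Theorem~\ref{far} inductively: for each integer $k$, the weighted flux $\int_{\cC_u} r^k(\partial_v\phi)^2\md v$ together with the lower-order bulk term $\int r^{k-1}(\partial_v\phi)^2$ feeds, via the $p=k-1$ estimate and another pigeonhole in $\tau$, a decay rate $\tau^{-k}$ for the quantity $\int_{\cC_u} r^{k-1}(\partial_v\phi)^2\md v$; iterating down to $k=1$ and combining with the local estimates \eqref{near}, \eqref{funf} to handle the bounded-$r$ part of $\Sigma^{\tilde r,r^*}_\tau$ yields $f(\tau)\le C\tau^{-n}\big(\int_{\cC_{u_0}(v_0(\tilde r),\infty)} r^n(\partial_v\phi)^2\md v + C\,E_0\big)$, which is \eqref{square}. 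The main obstacle I anticipate is \emph{bookkeeping at the interfaces}: the regions $\cR_{II}, \cR_{III}, \cR_{IV}$ are glued along timelike hypersurfaces $r=\tilde r, r=r_\star, r=w^*$ whose boundary terms do not obviously cancel, and one must exploit the mean-value choices of $r_\star,w^*$ (as in the proof of Proposition~\ref{decaynear}) together with the fact that $X_f$'s timelike boundary term was arranged to vanish at $r=\tilde r$, so that every timelike boundary contribution is either zero, has a favorable sign, or is controlled by an already-established quantity; verifying this chain of absorptions uniformly in $\tau$, and checking that the constants $C$ in the induction do not degenerate as $k\to n$, is where the real care is needed. A secondary technical point is justifying the divergence theorem on the non-compact regions reaching to $\cI^+$, handled as elsewhere in the paper by exhausting with compact sets and using the $p\ge 1$ weights to ensure the $\cI^+$ flux of $V$ vanishes in the limit.
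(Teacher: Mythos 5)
Your proposal follows essentially the same route as the paper: first a global integrated (ILED-type) bound assembled from the near-horizon $N$ estimate, the intermediate $X_f$ estimate and the $p=1$ case of the $r^p$ estimate, then the pigeonhole/near-monotonicity argument via Proposition \ref{energy-conservation} to get the $1/\tau$ decay \eqref{173}, and finally the inductive $r^p$ hierarchy with dyadic time sequences to bootstrap to $\tau^{-n}$, exactly as in the paper's proof (which likewise handles the timelike interface terms through the mean-value choices of $r_\star, w^*$ and the choice of $b$ in $X_f$, and the non-compact regions by exhaustion). No substantive difference in method; your anticipated technical caveats coincide with the points the paper itself addresses.
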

\begin{proof}
Using Theorem \ref{far} for $p=1$ we have 
\bea
\label{p1}
&&\int_{\cC_{u_1}(v_1(\tilde{r}),\infty)}r (\partial_v\psi)^2 \md v+ \int_{\cR_{{IV}_C}} (\partial_v\psi)^2 \md v\md u \nonumber \\
&&=\int_{\cC_{u_0}(v_0(\tilde{r}),\infty)}r (\partial_v\psi)^2 \md v+\left|\int_{\tau_0}^{\tau_1} 2R (\partial_v\psi)^2 \md\tau\right|,
\eea
and  notice that in this region we have
\be
\int_{\cR_{{IV}_C}} J_{\mu}^{N}(\psi) n^{\mu}_{C_u} \dV_{C_u}=\int_{\cR_{{IV}_C}} J_{\mu}^{\partial_\tau}(\psi) n^{\mu}_{C_u} \dV_{C_u} =\frac{1}{2}  \int_{\cR_{{IV}_C}} (\partial_v\psi)^2 \md v\md u.
\ee
Also, 
\be
\lb{176}
\left|\int_{\tau_0}^{\tau_1} 2r^* (\partial_v\psi)^2 d\tau\right| \le C \int_{ \Sigma_{\tau_0} } J_{\mu}^{N}(\psi) n^{\mu}_{\tau}\dV_{ \Sigma_0 }
\ee


Putting this together in Eq. (\ref{p1}) and ignoring a positive contribution from the left side we get
\be
\int_{\cR_{{IV}_C}}J_{\mu}^{\partial_\tau}(\psi) n^{\mu}_{\tau}\dV_{\cR_{{IV}_C}}\le C \left (\int_{\cC_{u_0}(v_0(\tilde{r}),\infty)}r (\partial_v\psi)^2 dv+\int_{\Sigma_{\tau_0}} J_{\mu}^{N}(\psi) n^{\mu}_{\tau}\dV_{\Sigma_{\tau_0}}\right).
\ee 
Using the estimates \eqref{funf}, \eqref{near}, we obtain 
\bea
&\int_{\cR_{{IV}_C}}J_{\mu}^{N}(\psi) n^{\mu}_{\tau}\dV_{\cR_{{IV}_C}}+ \int_{(\cR_{IV_A}\cup\cR_{{IV}_B})}J_{\mu}^{N}(\psi) n^{\mu}_{\tau}\dV_{(\cR_{IV_A}\cup\cR_{{IV}_B})} \nonumber \\
&\le\int_{\cR_{{IV}_C}}J_{\mu}^{N}(\psi) n^{\mu}_{\tau}\dV_{\cR_{{IV}_C}}+ \int_{\cR_{III}}J_{\mu}^{N}(\psi) n^{\mu}_{\tau}\dV_{\cR_{III}} \nonumber \\
&\le C \left (\int_{\cC_{u_0}(v_0(\tilde{r}),\infty)}r (\partial_v\psi)^2 dv+\int_{\Sigma_{\tau_0}} J_{\mu}^{\partial_\tau}(\psi) n^{\mu}_{\tau}\dV_{\Sigma}\right),
\eea 
combining  the coarea formula with Proposition \ref{energy-conservation}  gives the estimate \eqref{173}.
Moreover,  from Theorem \ref{far} we have  in the region \mbox{$\cR_{{IV}_C}$} the following estimate 
\bea
&&\int_{\cC_{u_1}(v_1(\tilde{r}),\infty)}r ^p(\partial_v\phi)^2 \md v+ \int\limits_{\cR_{{IV}_C}} p r^{(p-1)}(\partial_v\phi)^2 \md v\md u \nonumber \\
&& \qquad
\leq\int_{\cC_{u_0}(v_0(\tilde{r}),\infty)}r^p (\partial_v\phi)^2 \md v+ C \int_{ \Sigma_{\tau_0} } J_{\mu}^{N}(\psi) n^{\mu}_{\tau}\dV_{ \Sigma_{\tau_0} }
\eea
This implies we can find a dyadic sequence $\tau_n\rightarrow \infty$ with the property that 
\bea
\label{p2}
&&\int_{\cC_{u_{\tau_n}}(v_(w_{\tau_n})^*),\infty)} p r^{(p-1)}(\partial_v\phi)^2 \md v \nonumber \\
&&
\leq C\tau_n^{-1}\left(\int_{\cC_{u_0}(v_0(\tilde{r}),\infty)}r^p (\partial_v\phi)^2 \md v+ C \int_{ \Sigma_{\tau_0} } J_{\mu}^{N}(\psi) n^{\mu}_{\tau}\dV_{ \Sigma_{\tau_0}}\right)
\eea
We now apply Eq. \eqref{div_D} with $p-1$ in place of $p$ to obtain 
\bea
&&\int\limits_{\cC_{u_{\tau_n}}(v_(w_{\tau_n}^*),\infty)} r ^{p-1}(\partial_v\phi)^2 \md v+ \int\limits_{{\cR_{{IV}_C}}^{\tau_n}_{\tau_{n-1}}} (p-1) r^{(p-2)}(\partial_v\phi)^2 \md v\md u\nonumber \\
&&\stackrel{\eqref{176}}{\le}\int_{\cC_{u_{\tau_{n-1}}}(v_(w_{\tau_{n-1}}^*),\infty)}r^{p-1} (\partial_v\phi)^2 \md v+  C \int_{\Sigma^{\tilde{r},r^*}_{\tau_{n-1}}} J_{\mu}^{N}(\psi) n^{\mu}_{\tau}\dV_{\Sigma^{\tilde{r},r^*}_{\tau_{n-1}}} \nonumber \\
&&\stackrel{\eqref{p2}}{\le} C\tau_n^{-1}\left(\int_{\cC_{u_0}(v_0(\tilde{r}),\infty)}r^p (\partial_v\phi)^2 \md v+ C \int_{ \Sigma_{\tau_0} } J_{\mu}^{N}(\psi) n^{\mu}_{\tau}\dV_{ \Sigma_{\tau_0}}\right)\nonumber \\
&& \qquad
+ C  \int_{\Sigma^{\tilde{r},r^*}_{\tau_{n-1}}} J_{\mu}^{N}(\psi) n^{\mu}_{\tau}\dV_{\Sigma^{\tilde{r},r^*}_{\tau_{n-1}}}
\eea
in the region ${\cR_{{IV}_C}}^{\tau_n}_{\tau_{n-1}}$. 
For $p=2$ and adding a multiple of the ILED \eqref{funf}, \eqref{near} we get 
\bea
&&\int_{{\cR_{{IV}_C}}^{\tau_n}_{\tau_{n-1}}}J_{\mu}^{N}(\psi) n^{\mu}_{\tau}\dV_{{\cR_{{IV}_C}}^{\tau_n}_{\tau_{n-1}}}\nonumber \\
&&+ \int_{(\cR_{IV_{A}}{^{\tau_n}_{\tau_{n-1}}}\cup\cR_{{IV}_B}{^{\tau_n}_{\tau_{n-1}}})}J_{\mu}^{N}(\psi) n^{\mu}_{\tau}\dV_{(\cR_{IV_{A}}{^{\tau_n}_{\tau_{n-1}}}\cup\cR_{{IV}_B}{^{\tau_n}_{\tau_{n-1}}})}\nonumber \\
&&\le  C\tau_n^{-1}\left(\int_{\cC_{u_0}(v_0(\tilde{r}),\infty)}r (\partial_v\phi)^2 \md v+ C \int_{ \Sigma_{\tau_0} } J_{\mu}^{N}(\psi) n^{\mu}_{\tau}\dV_{ \Sigma_{\tau_0}}\right)\nonumber \\
&&+ C \int_{\Sigma^{\tilde{r},r^*}_{\tau_{n-1}}} J_{\mu}^{N}(\psi) n^{\mu}_{\tau}\dV_{\Sigma^{\tilde{r},r^*}_{\tau_{n-1}}}
\eea
Using  Eq.\eqref{173}, the properties of the dyadic sequence and the energy boundness given by Proposition \ref{energy-conservation}  we get 
\bea
&&\int_{\Sigma^{\tilde{r},r^*}_{\tau}}J_{\mu}^{N}(\psi) n^{\mu}_{\tau}\dV_{\Sigma^{\tilde{r},r^*}_{\tau}}\nonumber \\
&&\le  C\frac{C}{\tau^2}\left(\int_{\cC_{u_0}(v_0(\tilde{r}),\infty)}r^2 (\partial_v\phi)^2 dv+ C \int_{ \Sigma_{\tau_0} } J_{\mu}^{N}(\psi) n^{\mu}_{\tau}\dV_{ \Sigma_{\tau_0}}\right)
\eea 
for any $\tau\ge \tau_0$.
Using  Theorem \ref{far} with $p=3$ , another dyadic sequence,  the mean value theorem, \ref{energy-conservation} and Eq. \eqref{p2} we obtain the estimate 
\bea
&&\int_{\Sigma_{\tau_n}}J_{\mu}^{\partial_\tau}(\psi) n^{\mu}_{\tau}\dV\nonumber \\
&& \le  \frac{C}{\tau^2}\left(\frac{C}{{\tau_n}}\left(\int_{\cC_{u_0}(v_0(\tilde{r}),\infty)}r^3 (\partial_v\phi)^2 \md v+ C \int_{ \Sigma_{\tau_0} } J_{\mu}^{\partial_\tau}(\psi) n^{\mu}_{\tau}\dV_{ \Sigma_{\tau_0}}\right)\right.\nonumber \\
&&
\qquad \left.+ C \int_{\Sigma^{\tilde{r},r^*}_{\tau_{n-1}}} J_{\mu}^{N}(\psi) n^{\mu}_{\tau}\dV_{\Sigma^{\tilde{r},r^*}_{\tau_{n-1}}}\right)\nonumber \\
&&\le  \frac{C}{\tau^2}\left(\frac{C}{_{\tau_n}}\left(\int_{\cC_{u_0}(v_0(\tilde{r}),\infty)}r^3 (\partial_v\phi)^2 \md v+ C \int_{ \Sigma_{\tau_0} } J_{\mu}^{\partial_\tau}(\psi) n^{\mu}_{\tau}\dV_{ \Sigma_{\tau_0}}\right)
\right.\nonumber \\
&&
\qquad \left.
+  \frac{C}{\tau_n}\int_{ \Sigma_{\tau_0} } J_{\mu}^{\partial_\tau}(\psi) n^{\mu}_{\tau}\dV_{ \Sigma_{\tau_0}}\right) \nonumber \\
&&\le  \frac{C}{\tau \tau_n}\left(\int_{\cC_{u_0}(v_0(\tilde{r}),\infty)}r^3 (\partial_v\phi)^2 \md v+  \int_{ \Sigma_{\tau_0} } J_{\mu}^{\partial_\tau}(\psi) n^{\mu}_{\tau}\dV_{ \Sigma_{\tau_0}}\right)
\eea
Since  we can repeat the argument for all $p\in\mathbb{N}$, we obtain 
\be
\int_{\Sigma^{\tilde{r},r^*}_\tau}J_{\mu}^{\partial_\tau}(\psi) n^{\mu}_{\tau}\dV_{\Sigma_\tau}\le  \frac{C}{\tau^n}\left(\int_{\cC_{u_0}(v_0(\tilde{r}),\infty)}r^n (\partial_v\phi)^2 dv+  \int_{ \Sigma_{\tau_0} } J_{\mu}^{\partial_\tau}(\psi) n^{\mu}_{\tau}\dV_{ \Sigma_{\tau_0}}\right)
\ee
for all $n\in \mathbb{N}$.
\end{proof}
{

\section{Discussion}
\noindent
Due to the equivalence principle, the energy decay of a family of uniformly accelerated observers must share similar properties with the energy decay of a family of stationary observers in a gravitational field. In this section, we focus on the comparison between Rindler spacetime and Schwarzschild spacetime. This comparison follows from the observation that in Schwarzschild the metric is given by
 \begin{align*}
ds^2&=-(1-\frac{2m}{r})dt^2+(1-\frac{2m}{r})^{-1}dr^2+r^2d\Omega^2\\
&=-\xi^2dt^2+(1-\xi^2)^{-4}d\xi^2+r^2(\xi)d\Omega^2 \\
&=-{{\xi^2 dt^2+(1+4\xi^2+...)d\xi^2}}+r^2(\xi)d\Omega^2
\end{align*}
\noindent
for $\xi^2:=1-\frac{2m}{r}<1$ where $t$ is proportional to the proper time of static observers and $\xi$ small describes the near horizon geometry, while the Rindler metric is given by
$$ds^2=-\xi^2 d\tau^2+d\xi^2+dy^2+dz^2$$
\noindent
for $0<\xi<\infty$ where $\tau$ is proportional to the  proper time of uniformly accelerated observer and uniformly accelerated observers remain at fixed $\xi$. 

\noindent

\subsection{Comparison with Schwarzschild spacetime.}
We discuss three separate regions as shown in the Figure below.

\begin{figure}[h]
\centering\includegraphics[width=4.7cm]{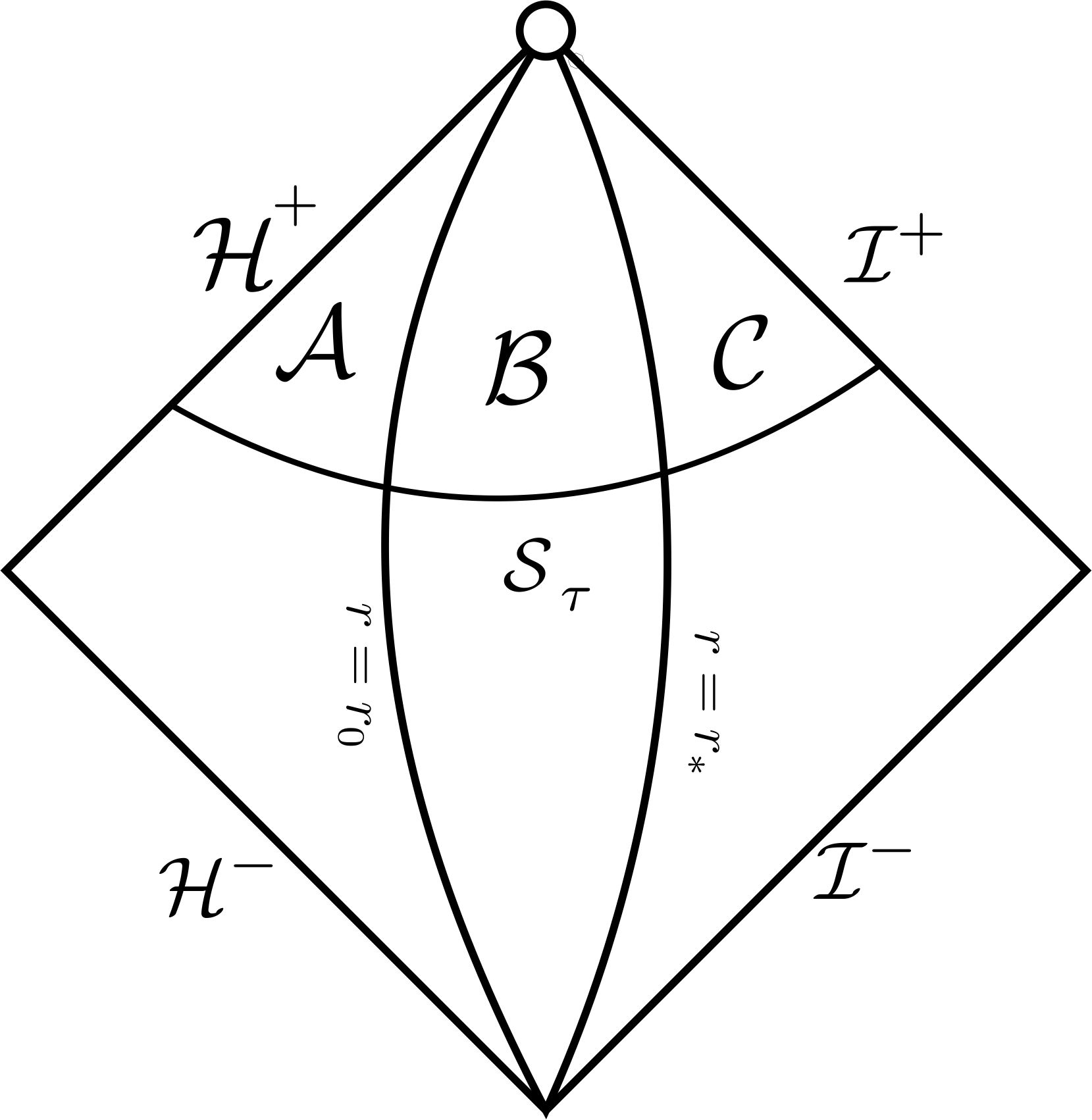}\quad\includegraphics[width=4.7cm]{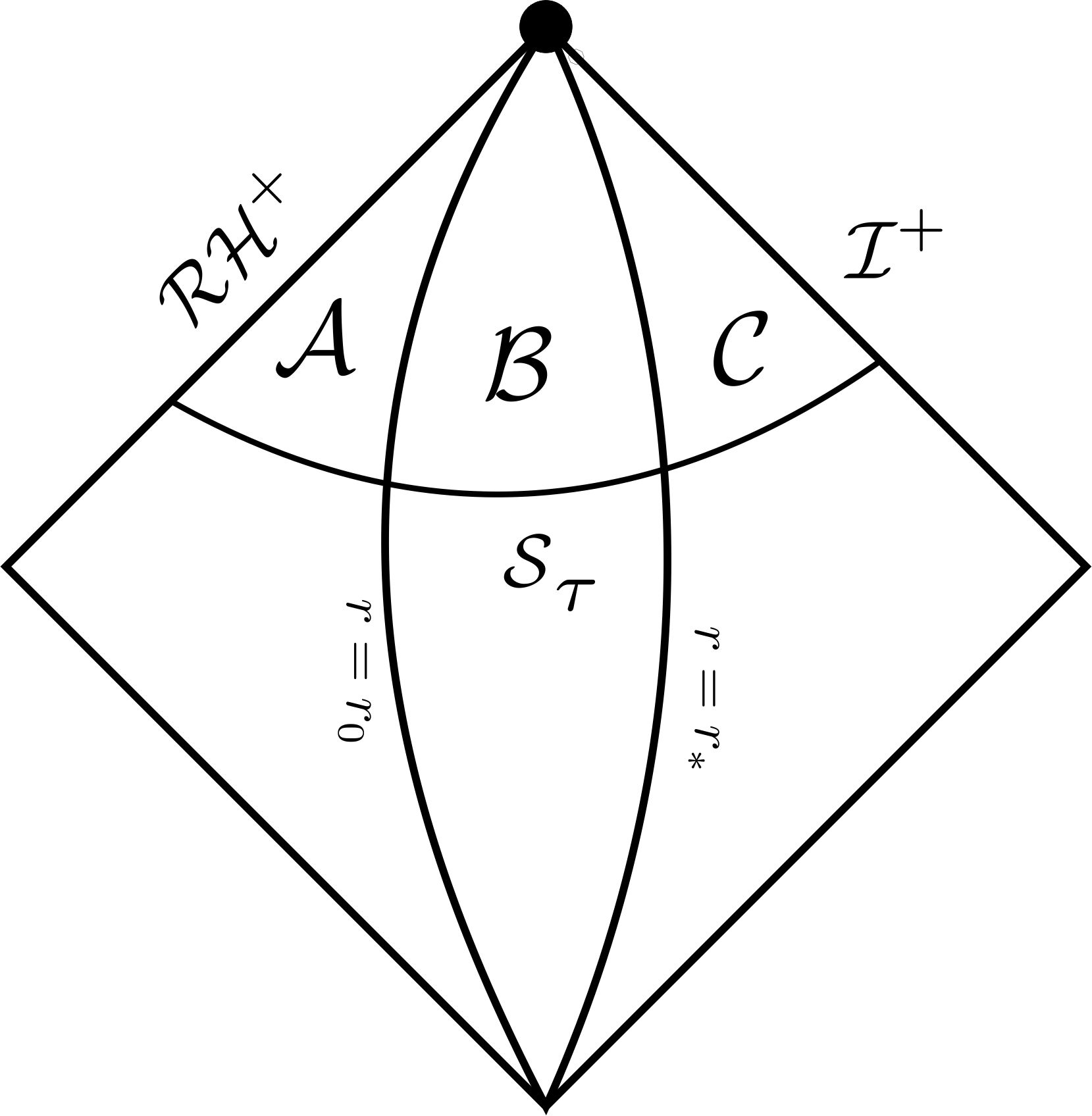}
\caption{Left: Schwarzschild spactime divided into three regions by the radial coordinate $r$. Right: Rindler spacetime divided in three regions by the coordinate $r$ described in \eqref{r-t-coords}.}
\end{figure} 

\subsection{Region ${\cal{B}}$}
This region corresponds to compact regions away from the horizon. The analysis of trapping effects in Schwarzschild, due to the photon sphere located at $r=3M$,  are treated here. At the level of the estimates one needs to lose derivatives in order to obtain a Integrated Local Energy Decay estimate \cite[Section 4.1]{lectures}. Since in Rindler there is no such trapping behaviour, we obtain the Integrated Local  Energy Decay estimate without loss of derivatives (Theorem \ref{decaymedium} Eq.\eqref{funf}). Notice that the situation also differs to what happens in Minkowski spacetime. There for compact regions there exists a finite $t$ such that the energy flux defined with respect to the Killing vector $\partial_t$  is zero. Although, Rindler spacetime is a subset of Minkowski spacetime, the energy flux of the solutions with respect to $\partial_\tau$  in any compact region does not necessarily vanish for any finite $\tau$. 

\subsection{Region ${\cal{C}}$}

The far away region is usually analysed using conformal compactifications. In the $1+1$ Rindler spacetime, by inspection one can show that part of the Minkowski null infinity region coincide with the null infinity region in Rindler \cite{semay, socolov}. However, the conformal compactification of Rindler can be done in several ways e.g. by looking at the corresponding region in the conformal compactification of Minkowski spacetime or using a conformal transformation that sends the Rindler wedge into a casual diamond in Minkowski spacetime \cite[Appendix E]{jacobson}, \cite[V.4.2]{haag}. 
 It is in this region where we are able to obtain better estimates for the $r^p$ vector field compared to the Schwarzschild case. This may be a consequence of the $1+1$ case. In the higher dimensional case the transversal direction may constraint the range of possible values of $p$. Furthermore, in higher dimensional Rindler spacetimes there is also the ambiguity of what should be the suitable analogue to the usual spherical coordinates.
 In addition, the asymptotic symmetries at infinity in Rindler spacetime differ from the asymptotic symmetries in the  asymptotically flat case \cite{asymptotic}. Finally, unlike, Schwarzschild and Minkowski spacetime where timelike infinity (denoted by the white dot)  is defined using timelike geodesics,  in the Rindler case, the `timelike infinity' (denoted by the black dot) is defined using the uniformly accelerated observers.

\subsection{Region ${\cal{A}}$}

In the near horizon Rindler region, the existence of a bifurcated Killing horizon with positive surface gravity allows one to construct the redshift vector field $Y$ and the local observers vector field $N$ as in the Schwarzschild case. However, in Schwarzschild spacetime there is a compact bifurcation surface which is not the case in higher dimensional Rindler spacetimes where there is a bifurcation hyperplane. Nevertheless, the existence of the redshift vector in both spacetimes suggests further analogies. For example, in Schwarzschild, the redshift effect in  scattering constructions from the future turns into a blueshift. This statement can be formulated as the non-invertibility of forward scattering maps. To be precise, one can define in Schwarzschild spacetime, the maps $$ E^{X}_{\Sigma_{\tau_0}}\to E^{X}_{{\cal{H}}^+}\oplus E^{X}_{{\cal{I}}^+}$$ for $X=\partial_{t}$ and  $X=N$ where $$E^X_{{\cal{M}}}(\psi):=\int_{{\cal{M}}}T_{\mu\nu}(\psi)X^{\mu}n^{\nu}_{{\cal{M}}}\text{dV}_{{\cal{M}}}=\int_{{\cal{M}}}J^X_{\nu}(\psi)n^{\nu}_{{\cal{M}}}\text{dV}_{{\cal{M}}}$$ and $\Sigma_{\tau_0}$ is a suitable initial Cauchy hypersurface. The existence of the forward scattering maps can be done using decay estimates combined either with  the existence of radiation fields  \cite{time, mos} or a suitable conformal compactification \cite{nicolas}. Moreover, it was shown in \cite[Theorem 4.1]{time} that for $X=\partial_t$ there is an inverse map  $$E^{\partial_t}_{{\cal{H}}^+}\oplus E^{\partial_t}_{{\cal{I}}^+}\to E^{\partial_t}_{\Sigma_{\tau_0}}$$ while for $X=N$ this is not possible. The non-invertibility is  a consequence of the time reversed redshift at the horizon.  

To further, give evidence that this is also the case in Rindler,  we will use the characterisation of the energy of Gaussian Beams on Lorentzian manifolds.
Notice that by \cite[Theorem 5.1]{gaussianbeams} there exists a smooth solution $v$ of the wave equation with $E^{X}_{t=0}(v) = -g(X, \dot{\gamma}(\gamma(0)))$ such that
\[
|E^{X}_{t=T_0,\mathcal{N} \cap \Sigma_\tau}(v) - (-g(X, \dot{\gamma}))_{\operatorname{Im}\gamma \cap T=T_0}| < \mu
\]
for all $0 \leq t \leq T_1$, where  $t$ is the Minkowski time coordinate map. Furthermore, $\gamma : [0, S) \to M$ is the affinely parametrized generator of $\cal{RH}^+$, $X$  a timelike, future-directed vector field and $\mathcal{N}$ a neighbourhood of $\gamma$.

Explicitly, we have $\gamma(s)=(s,s)$ and $-g(N,\dot{\gamma})=e^{-as}$ which means that the energy of the corresponding Gaussian beam decays exponentially. This is  a direct manifestation of red-shift effect at the Rindler Horizon. Moreover, since all hypothesis are satisfied, an analogous proposition to \cite[Proposition 6.3]{gaussianbeams} holds. Explicitly,

\begin{prop}
 For every $\mu > 0$ and every $t > 0$, there exists a smooth solution $v$ to the wave equation with $E^N_{t}(v) = 1$ and $\int_{\cal{RH}^+}J_\nu^N n^\nu dV<\mu$ 
which satisfies
\[
E^N_{0}(v) \geq e^{at} - \mu,
\]
where $a$ is the surface gravity at the Rindler Horizon.
\end{prop}

The proposition demonstrates that for every $t_0 > 0$, initial data can be specified for the mixed characteristic initial value problem on $\mathcal{RH}^+ \cup \{(t,x):t=t_0,x>t_0\}$ such that the total initial energy equals one. Meanwhile, by solving the equation backwards, the energy of the resulting solution approaches approximately $e^{a t_0}$ on $t=0$.

\subsection{The principle of equivalence.}
The equivalence principle was initially stated  by Einstein as the assumption that there is  a  ``complete physical equivalence of a gravitational field and a corresponding acceleration of the reference system" \cite{einstein}. It is a geometric
consequence of General Relativity and was a crucial guiding principle during the initial developing
stages of the theory. Moreover, it has been experimentally verified with precision up to the atomic level, see for example \cite{will} or the contribution of Shapiro in \cite{shapiro}. It is worth noting that the analysis of the equivalence principle at the quantum level has been explored, as evidenced by  references \cite{giua},\cite{giub} and \cite{det}. Precise definitions of the equivalence principle are crucial for this analysis. For a detailed discussion on the various formulations of this principle, along with a historical overview, we refer the reader to \cite{dennis}.

In what follows, we aim to explore the extent to which classical wave equations render acceleration and gravity indistinguishable, and what distinct signatures might allow us to differentiate between them. As previously discussed, Rindler spacetime, which represents flat spacetime from the perspective of uniformly accelerated observers, exhibits similarities to, yet important differences from, the gravitational field produced by a spherically symmetric static mass, as described by the Schwarzschild metric. These differences and similarities provide key insights into the underlying physics of acceleration and gravity.

It is clear from the above comparison that in Region $\mathcal{A}$, which is close to the horizon, the behavior of wave equations is similar; for instance, there is a degeneration of energy with respect to static observers in the Schwarzschild case and with respect to accelerated observers in Rindler. Additionally, there is an exponential energy decay with respect to local observers, as indicated by Gaussian beams. This similarity is due to the presence of a Killing horizon with associated surface gravity. It is noteworthy that these effects occur regardless of whether one spacetime is flat and the other can be possibly strongly curved.

In the intermediate region $\cal{B}$, there are distinguishable effects between the two spacetimes. The behavior in Schwarzschild spacetime, particularly trapping phenomena, distinctively influences local energy decay estimates. From a physical perspective, the photon sphere acts as an obstruction to decay; however, the geodesics in this region are unstable. Nevertheless, in principle, an observer located near the photon sphere would not notice energy decay, in contrast to what is observed in Rindler spacetime, where such obstructions are absent. Therefore, one could distinguish between accelerated frames and gravitational fields by measuring the decay rates of energy in the associated field, provided one is sufficiently far away from the horizon and close to the photon sphere. 

In the distant region $\mathcal{C}$, Schwarzschild and Rindler spacetimes exhibit a notable similarity concerning the interaction between the horizon, null infinity, and their associated scattering constructions. In Schwarzschild spacetime, an early-time energy blow-up occurs for decaying data observed at null infinity and vanishing at the event horizon, a phenomenon resulting from the redshift transforming into a blueshift. Above we have shown, that a similar blow-up is likely to occur also in Rindler. A parallel situation occurs in quantum physics. For instance, in Schwarzschild, the energy-momentum tensor of the Boulware vacuum diverges at the event horizon. Similarly, the energy-momentum tensor of the Rindler vacuum also exhibits a singularity at the horizon, indicating a similar kind of blow-up.

We have presented some results that shows the extent to which the principle of equivalence holds at the level of massless wave equations. This issue is subtle, in particular in relation with locality and the role of curvature. Our discussion highlights that certain global behaviors—specifically related to the energy of waves observed by accelerated observers in flat spacetime compared to that observed by static observers in a gravitational field—exhibit similarities. Moreover, close to the horizon, the redshift effect on the energy decay is similar, even in the presence of strong curvature in the gravitational case. Therefore, the validity of this principle depends critically on the specific statement one aims to prove.

}
\section*{Acknowledgements}  
The authors would like to thank the Max Planck Institute of Mathematics in Bonn for its hospitality during the writing of this  paper.
We also thank the  Oberwolfach Institute, Leibniz University Hannover and CAMGSD, IST Lisboa for additional support.
Special thanks to Pedro Gir\~ao for valuable comments on the manuscript. Further, we also benefited from discussions with José Natário and Michael Gruber. 
This work was partially supported by FCT/Portugal through CAMGSD, IST-ID , projects UIDB/04459/2020
and UIDP/04459/2020, by FCT/Portugal and through the FCT fellowship CEECIND/00936/2018 (A.F.). 

We have no conflicts of interest to disclose.

\begin{appendix}

\section{Regular double-null coordinates}
\lb{reg_dn}
In double null coordinates $(U,v)$ which are regular at $\cR\cH^+$, we use the transformation
\bea
\lb{U-coord}
u(U)&=&-\frac{1}{a}\ln |a U|, \qquad U=-\frac{1}{a}e^{-au},\\
\mbox{with} \qquad \md u&=&\frac{\partial u}{\partial U} \md U=-\frac{1}{aU}\md U={e^{au}}\md U,
\eea
and obtain the metric
\bea
\lb{doublmetricreg}
\md s^2=-{e^{av}}\md U \md v .
\eea
Defining the null cordinate $V$ as
\bea
\lb{V-coord}
v(V)&=&\frac{1}{a}\ln (a V), \qquad V=\frac{1}{a}e^{av},\\
\mbox{with} \qquad \md v&=&\frac{\partial v}{\partial V} \md V=\frac{1}{aV}\md V={e^{-av}}\md V,
\eea
we obtain in the $(u,V)$ coordinate system the metric
\bea
\lb{doublmetricregV}
\md s^2=-{e^{-au}}\md u \md V .
\eea
Hence, in the $(U,V)$ coordinate system we get the familiar Minkowski form of the metric
\bea
\lb{doublmetricreg2}
\md s^2=-\md U \md V ,
\eea
and the relationship between laboratory coordinates and  regular double null coordinates is given by
\bea
U=t-x, \quad \mbox{and} \quad V=t+x,
\eea
restricted to the subdomain $x>|t|$ yielding the range $-\infty<U<0$ and $0<V<\infty$.

\section{The $K$-current}
In order to calculate the scalar-currents for the bulk term according to \eqref{K} we fist derive the components of the deformation tensor 
\bea
\lb{def_tensor}
(\pi^X)^{\mu\nu}=\frac{1}{2}(g^{\mu\lambda}\partial_\lambda X^{\nu}+g^{\nu\lambda}\partial_\lambda X^{\mu}+g^{\mu\lambda}g^{\nu\delta}g_{\lambda\delta,\rho}X^{\rho})
\eea
in different coordinates.
In $(r, \tau, y, z)$-coordinates with the metric \eqref{r-tau-metric} and for the arbitrary vector field $X=X^r(r, \tau)\partial_r+X^{\tau}(r, \tau)\partial_{\tau}$ we obtain
\bea
\lb{def_components_rtau}
(\pi^X)^{\tau\tau}&=&{e^{-2ar}}[\partial_{\tau} X^{\tau}+aX^{r}],\\
(\pi^X)^{{r}{r}}&=&{e^{-2ar}}[\partial_{r} X^{r}+aX^{r}],\\
(\pi^X)^{\tau {r}}&=&\frac{e^{-2ar}}{2}[-\partial_{\tau} X^{r}+\partial_{r}X^{\tau}],
\eea
Further, with  \eqref{energymomentum} the energy-momentum tensor yields
\bea
\lb{tensor_rtau}
T_{{\tau}{\tau}}&=&\frac{1}{2}[(\partial_{\tau}\psi)^2+(\partial_{r}\psi)^2],\\
T_{{r}{r}}&=&\frac{1}{2}[(\partial_{\tau}\psi)^2+(\partial_{r}\psi)^2],\\
T_{{\tau}r}&=&(\partial_{\tau}\psi\partial_{r}\psi),
\eea
we obtain the scalar current
\bea
\lb{K_current_rtau}
K^X&=& \quad (\partial_{\tau} \psi)^2 \frac{e^{-2ar}}{2}\left[\partial_r X^r - \partial_{\tau} X^{\tau} \right]\nonumber \\
&& +(\partial_{r} \psi)^2 \frac{e^{-2ar}}{2}\left[\partial_r X^r - \partial_{\tau} X^{\tau}\right]\nonumber \\
&&+ (\partial_{\tau} \psi\partial_{r} \psi){e^{-2ar}}\left[\partial_r X^{\tau} - \partial_{\tau} X^{r}\right]
\eea
In $(u,v)$-coordinates with the metric \eqref{doublmetric} and for the arbitrary vector field {$X=X^u(u,v)\partial_u+X^v(u,v)\partial_v$} 
we get 
\bea
\lb{def_components_uv}
(\pi^X)^{uu}&=&-2e^{-a(v-u)}\partial_v X^u,\\
(\pi^X)^{vv}&=&-2e^{-a(v-u)}\partial_u X^v,\\
(\pi^X)^{uv}&=&-e^{-a(v-u)}[\partial_v X^v+\partial_u X^u]+ae^{-a(v-u)}[ X^u- X^v].
\eea
With \eqref{energymomentum} we get the following components for the energy-momentum tensor,
\bea
\lb{tensor_uv}
T_{uu}&=&(\partial_u\psi)^2,\\
T_{vv}&=&(\partial_v\psi)^2,\\
T_{uv}&=&0,
\eea
Multiplying the components then leads to 
\bea
\lb{K_current_uv}
K^X&=&-2e^{-a(v-u)}[(\partial_u\psi)^2\partial_vX^u+(\partial_v\psi)^2\partial_uX^v].
\eea

\section{The $J$-current}
\lb{J_current}

 For the timelike normal vector on a constant $\tau$ hypersurface we have
\bea
\lb{normal_tau2}
&n^\mu_{\tau=c}=e^{-ar}\partial_{\tau},\quad \mbox{with} \quad &\dV_{\tau}={e^{ar}}\md r,
\eea
so that we can derive the currents
\bea
\lb{JX_tau}
J^{X}_{\mu}n^\mu_{\tau=c}&=&\frac{X^{\tau}e^{-ar}}{2}\left[[(\partial_{\tau}\psi)^2+(\partial_{r}\psi)^2]\right]\nonumber \\
&&+X^re^{-ar}(\partial_r \psi \partial_{\tau} \psi),
\eea
\bea
\lb{JX_r}
J^{X}_{\mu}n^\mu_{r=c}&=&\frac{X^{r}e^{-ar}}{2}\left[(\partial_{\tau}\psi)^2+(\partial_{r}\psi)^2\right]\nonumber \\
&&+X^{\tau}e^{-ar}(\partial_r \psi \partial_{\tau} \psi),
\eea
with $X=X^r(r,\tau)\partial_r+X^{\tau}(r,\tau)\partial_{\tau}$.\\
A vector field multiplier $X^u(u,v)\partial_u+X^{v}(u,v)\partial_{v}$ leads to the currents
\bea
\lb{JX_u}
J^{X}_{\mu}n^\mu_{u=c}&=&X^v2e^{-a(v-u)}(\partial_{v}\psi)^2,\\
\lb{JX_v}
J^{X}_{\mu}n^\mu_{v=c}&=&X^u2e^{-a(v-u)}(\partial_{u}\psi)^2,\\
\lb{JX_tau_null}
J^{X}_{\mu}n^\mu_{\tau=c}&=&e^{-\frac{a(v-u)}{2}}\left[X^u(\partial_{u}\psi)^2+X^v(\partial_{v}\psi)^2\right]\nonumber.
\eea

\end{appendix}

\bibliographystyle{plain}
\bibliography{refs3}
\end{document}